\newcommand{\bigo}[1]{\mathbf{\mathcal{O}} \left ( #1 \right)}
\newcommand{\expect}[1]{\mathbb{E}  \left [ #1 \right ]}
\newcommand{\opt}{\textsc{opt}\xspace}
\newcommand{\absl}[1]{\left \lvert #1 \right \rvert}
\newcommand{\mymut}{\mathsf{pmut}_{\beta}\xspace}
\newcommand{\mut}[1]{\mymut}
\newcommand{\app}{$\varepsilon$-approximation\xspace}
\newcommand{\greedy}{\textsc{Greedy}\xspace}
\newcommand{\probl}{Problem \ref{problem}\xspace}
\newtheorem{definition}{Definition}
\newtheorem{reduction}{Reduction}
\newtheorem{theorem}{Theorem}
\newtheorem{lemma}{Lemma}
\newtheorem{proposition}{Proposition}
\newtheorem{problem}{Problem}
\newcommand{\ignore}[1]{}
\begin{document}
%
\title{Greedy Maximization of Functions with Bounded Curvature under Partition Matroid Constraints}

\author{Tobias Friedrich\textsuperscript{1}, Andreas G\"obel\textsuperscript{1}, Frank Neumann\textsuperscript{2}, Francesco Quinzan\textsuperscript{1} and Ralf Rothenberger\textsuperscript{1}\\
\textsuperscript{1}{Chair of Algorithm Engineering, Hasso Plattner Institute, Potsdam, Germany}\\
\textsuperscript{2}{Optimisation and Logistics, School of Computer Science, The University of Adelaide, Adelaide, Australia}\\}

\maketitle
\begin{abstract}
We investigate the performance of a deterministic \greedy algorithm for the problem of maximizing functions under a partition matroid constraint. We consider non-monotone submodular functions and monotone subadditive functions. Even though constrained maximization problems of monotone submodular functions have been extensively studied, little is known about greedy maximization of non-monotone submodular functions or monotone subadditive functions.

We give approximation guarantees for \greedy on these problems, in terms of the \emph{curvature}. We find that this simple heuristic yields a strong approximation guarantee on a broad class of functions.

We discuss the applicability of our results to three real-world problems: Maximizing the determinant function of a positive semidefinite matrix, and related problems such as the maximum entropy sampling problem, the constrained maximum cut problem on directed graphs, and combinatorial auction games.

We conclude that \greedy is well-suited to approach these problems. Overall, we present evidence to support the idea that, when dealing with constrained maximization problems with bounded curvature, one needs not search for (approximate) monotonicity to get good approximate solutions.
\end{abstract}

\section{Introduction} 
Submodular functions capture the notion of diminishing returns, i.e. the more we acquire the less our marginal gain will be. This notion occurs frequently in the real world, thus, the problem of maximizing a submodular function finds applicability in a plethora of scenarios. Examples of such scenarios include: maximum cut problems~\cite{GoemansW95}, combinatorial auctions~\cite{MaeharaKSTK17}, facility location~\cite{citeulike:2277727}, problems in machine learning~\cite{EDFK17}, coverage functions~\cite{DBLP:journals/jmlr/KrauseSG08}, online shopping~\cite{TSK17}. As such, the literature on submodular functions contains a vast number of results spanning over three decades.

Formally, a set function $f\colon 2^V\rightarrow \mathbb{R}$ is \emph{submodular} if for all $U,W\subseteq V$, $f(U) + f(W) \geq f(U \cup W) + f(U \cap W)$. As these functions come from a variety of applications, in this work we will assume that, given a set $U\subseteq V$, the value $f(U)$ is returned from an oracle. This is a reasonable assumption as in most applications $f(U)$ can be computed efficiently. Often in these applications, a realistic solution is subject to some constraints. Among the most common constraints are Matroid and Knapsack constraints ---see \cite{DBLP:conf/stoc/LeeMNS09}. From these families of constraints the most natural and common type of constraints are uniform matroid constraints also known as cardinality constraints. Optimizing a submodular function given $k$ as a cardinality constraint is equivalent to finding a set $U$, with $|U|\leq k$, that maximizes $f(U)$. In this paper we consider submodular maximization under partition matroid constraints. These constraints are in the intersection of matroid and knapsack constaints and generalize uniform matroid constraints. In partition matroid constraints we are given a collection $B_1, \dots, B_k $ of disjoint subsets of~$V$ and integers $d_1\dots d_k$. Every feasible solution to our problem must then include at most $d_i$ elements from each set $B_i$. Submodular maximization under partition matroid constraints is considered in various applications, e.g. see~\cite{Lin:2010:MSV:1857999.1858133,5995589}.

The classical result of~\cite{citeulike:2277727} shows that a greedy algorithm achieves a $1/2$ approximation ratio when maximizing monotone submodular functions under partition matroid constraints. \cite{DBLP:journals/mor/NemhauserW78} showed that no-polynomial time algorithm can achieve a better approximation ratio than $(1-1/e)$. Many years later \cite{DBLP:journals/siamcomp/CalinescuCPV11} where able to achieve this upper bound using a randomized algorithm. Recently \cite{DBLP:journals/corr/abs-1807-05532} achieved a deterministic $0.5008$-approximation ratio by derandomizing search heuristics.

The previous approximation ratios can be further improved when assuming that the rate of change of the marginal values of $f$ is bounded. This is expressed by the curvature $\alpha$ of a function as in Definition~\ref{def:generalized_curvature}. The results of \cite{DBLP:journals/dam/ConfortiC84,V10} show that a continuous greedy algorithm gives a $\frac{1}{\alpha} ( 1 - e^{ -\alpha})$ approximation when maximizing a monotone submodular function under a matroid constraint.  Finally, \cite{DBLP:conf/icml/BianB0T17} show that the deterministic greedy algorithm achieves a $\frac{1}{\alpha} ( 1 - e^{ -\alpha})$ approximation when maximizing monotone submodular functions of curvature $\alpha$, but only under cardinality constraints.

All of the aforementioned approximation results rely on the fact that $f$ is monotone, i.e. $f(S) \leq f(T)$ for all $S\subseteq T$. In practice submodular functions such as maximum cut, combinatorial auctions, sensor placement, and experimental design need not be monotone. To solve such problems using simple greedy algorithms, often assumptions are made that the function $f$ is monotone or that $f$ is under some sense ``close'' to being monotone. Practical problems that are solved using greedy algorithms under such assumptions can be found in many articles such as \cite{DBLP:conf/icml/BianB0T17,DBLP:conf/icml/DasK11,DBLP:conf/nips/LawrenceSH02,DBLP:journals/jair/SinghKGK09}.

In this article we show that the greedy algorithm finds a $\frac{1}{\alpha} ( 1 - e^{ -\alpha})$-approximation in $\bigo{dn}$ oracle evaluations, for the problem of maximizing a submodular function subject to uniform matroid constraints (Theorem~\ref{thm:det_greedy_algorithm}). Furthermore, we derive similar approximation guarantees for the partition matroid case.

Additionally, we extend the results on monotone submodular functions to another direction, to the class of monotone subadditive functions. Subadditivity is a natural property assumed to hold for functions evaluating items sold in combinatorial auctions \cite{DBLP:conf/soda/BhawalkarR11,DBLP:conf/sigecom/Assadi17}. Formally, we say that a set function $f\colon 2^V\rightarrow \mathbb{R}$ is \emph{subadditive} if for all $U,W\subseteq V$, $f(U) + f(W) \geq f(U \cup W)$. We show (Theorem~\ref{thm:submodular_function_positive2}) that the greedy algorithm achieves a $\frac{1}{\alpha} ( 1 - e^{ \alpha^2-\alpha})$\footnote{In the case of a monotone function it always holds $\alpha \in [0, 1]$.} approximation ratio when optimizing monotone subadditive functions with curvature~$\alpha$ under uniform matroid constraints. As in the case of submodular functions, we extend these results to the case of a partition matroid constraint.

We motivate our results by considering three real world applications. The first application we consider is to maximize the logarithm of determinant functions. In this setting we are given a matrix $\mathcal{P}$ and we want to find the submatrix~$\mathcal{A}$ of $\mathcal{P}$ with the largest determinant, where~$\mathcal{A}$ satisfies matroid partition constraints. This problem appears in a variety of real world settings. In this article, as a real world example of this application, we compute the sensor (thermometer) placement across the world that maximizes entropy, subject to a cardinality constraint and subject to a partition matroid constraint where the partitions of the data sets are countries.

Our second application is the problem of finding the maximum directed cut of a graph, under partition matroid constraints. The cut function of a graph is known to be submodular and non-monotone in general \cite{DBLP:journals/siamcomp/FeigeMV11}. We show how to bound the curvature of the cut function with respect to the maximum degree. We also run experiments on this setting, showing that in most graphs of our dataset the deterministic greedy algorithm finds the actual optimal solution. Thus \greedy seems to perform well on non-monotone submodular functions in practice.

Finally, the third application is computing the social welfare of a subadditive combinatorial auction. We show that the social welfare is also a subadditive function and its curvature is bounded by the maximum curvature of the utility functions.

\section{Preliminary Definitions and Algorithms}
\label{subsection:problem_description}
\subsection{Problem description.}
We study the following optimization problem.
\begin{problem}
\label{problem}
Let $f\colon 2^V \longrightarrow \mathbb{R}_{\geq 0}$ be a non-negative function\footnote{We always assume that $f$ is normalized, that is $f(\emptyset) = 0$.} over a set $V$ of size $n$, let $B_1, \dots, B_k $ be a collection of disjoint subsets of $V$, and let $d_i$ integers s.t. $1 \leq d_i \leq \absl{B_i}, \ \forall i \in [k]$. We consider the maximization problem
\begin{equation*}
\max_{S\subseteq V} \left \{ f(S) : \absl{S \cap B_i}\leq d_i, \ \forall i \in [k]   \right \}.
\end{equation*}
\end{problem}
Note that the problem of maximizing $f$ under a cardinality constraint is a special case of the above, where $k = 1$ and $B_1 = V$.

We evaluate the quality of an approximation of a global maximum as follows. Let $U \subseteq V$ be a feasible solution to \probl. We say that $U$ is an \app if $f(U)/f(\opt) \geq \varepsilon $, where \opt is the optimal solution set. We often refer to the value $f(U)$ as the $f$-value of $U$. 

In this paper, we evaluate run time in the black-box oracle model: We assume that there exists an oracle that returns the corresponding $f$-value of a solution candidate, and we estimate the run time, by counting the total number of calls to the evaluation oracle.

To simplify the exposition, throughout our analyses, we always assume that the following reduction holds.
\begin{reduction}
\label{reduction}
For \probl we may assume $cd_i \leq \absl{B_i}$ for all $i = 1, \dots, k$, for an arbitrary constant $c> 0$. Moreover, we may assume that there exists a set $D_i \subseteq B_i$ of size $d_i$ s.t. $f(S) = f(S\setminus D_i)$ for all $S \subseteq V$, for all $i = 1, \dots, k$. 
\end{reduction}

\subsection{Algorithms.}

\begin{algorithm*}[t]
    \textbf{input:} a function $f\colon 2^V\longrightarrow \mathbb{R}_{\geq 0}$\;
    $\qquad \ \ \ $ disjoint subsets $B_1, \dots, B_k \subseteq V$\;
    $\qquad \ \ \ $ integers $d_1, \dots, d_k$ s.t. $0 \leq d_i \leq \absl{B_i}, \ \forall i \in [k]$\;
    \textbf{output:} an approximate global maximum $S$ of $f$ s.t. $\absl{S\cap B_i} \leq d_i, \ \forall i \in [k]$;
	\caption{The \greedy algorithm.}
 	$S \gets \emptyset$\;
 	\While{$\absl{S} \leq \sum_{i = 1}^k d_i$}{
	let $\omega \in V$ maximizing $f(S\cup \{ \omega \}) - f(S)$ and s.t. $\absl{(S\cup \{ \omega \})\cap B_i} \leq d_i, \ \forall i \in [k]$\;
            $S \gets S \cup \{\omega \}$\;
   	}
    \textbf{return} $S$\;
    \label{alg:greedy}
\end{algorithm*}
\greedy is the simple discrete greedy algorithm that appears in Algorithm~\ref{alg:greedy}. Starting with the empty set, \greedy iteratively adds points that maximize the marginal values with respect to the already found solution. This algorithm is a mild generalization of the simple deterministic greedy algorithm due to Nemhauser and Wolsey \cite{DBLP:journals/mor/NemhauserW78}. 

\subsection{Notation.}
\label{sec:notation}
For any  non-negative function $f\colon 2^V \longrightarrow \mathbb{R}_{\geq 0}$ and any two subsets $S, \Omega \subseteq V$, we define the \emph{marginal value} of $S$ with respect to $\Omega$ as $\rho_{\Omega} (S) = f(S \cup \Omega) - f(S)$.

We denote with $B_1, \dots, B_k$ disjoint subsets of $V$ and with $d_1, \dots, d_k$ their respective sizes, as in the problem description section. We denote with $d$ the sum $\sum_{j = 1}^k d_j$, and we define $\overline{d} = \inf_i d_i$. We denote with $D$ the subset of ``dummy" elements as in Reduction~\ref{reduction}, and we denote with $\opt$ any solution to \probl, such that $\opt \cap D = \emptyset$.

We let $S_t$ be a solution found by \greedy at time step $t$ and we denote with $\rho_t$ the marginal value $\rho_t = f(S_t) - f(S_{t - 1})$. We use the convention $\rho_0 = f(\emptyset)$. We define $\omega_t = S_{t} \setminus S_{t - 1}$.

\section{Curvature}

In this paper we give approximation guarantees in terms of the \emph{curvature}. 
Intuitively, the curvature is a parameter that bounds the maximum rate with which a function changes. As our functions~$f$ map sets to positive reals, i.e. $f\colon 2^V\longrightarrow \mathbb{R}_{\geq 0}$, we say that $f$ has curvature $\alpha$ if the value $f(S\cup \{\omega\}) - f(S)$ does not change by a factor larger than $1 - \alpha$ when varying $S$. This parameter was first introduced by \cite{DBLP:journals/dam/ConfortiC84} to beat the $(1 - e^{-1})$-approximation barrier of monotone submodular functions. Formally we use the following definition of curvature, relaxing the definition of \emph{greedy} curvature \cite{DBLP:conf/icml/BianB0T17}. 
\begin{definition}[Curvature]
\label{def:generalized_curvature}
Consider a non-negative function $f\colon 2^V \longrightarrow \mathbb{R}$ as in \probl. The curvature is the smallest scalar $\alpha$ s.t.
\[
\rho_\omega ((S \cup \Omega )\setminus \{\omega\}) \geq (1 - \alpha)\rho_\omega (S \setminus \{\omega\}),
\]
for all $S, \Omega \subseteq V$ and $\omega \in S\setminus \Omega$.
\end{definition}
Note that $\alpha \geq 0$. We say that a function $f$ has \emph{positive curvature} if $\alpha \leq 1$. Otherwise, we say that $f$ has \emph{negative curvature}. Note that a function is monotone iff. it has positive curvature. We remark that the curvature is invariant under multiplication by a positive scalar. In other words, if a function $f$ has curvature $\alpha$, then any function $c f$ has curvature $\alpha$, for all $c>0$. Moreover, the following simple result holds.
\begin{proposition}
\label{prop:sum_curvature}
Let $f, g \colon 2^V \longrightarrow \mathbb{R}_{\geq 0}$ be non-negative functions with curvature $\alpha_1, \alpha_2$ respectively. Then the curvature $\alpha$ of the function $f + g$ is upper-bounded as $\alpha \leq \sup_i \alpha_i$.
\end{proposition}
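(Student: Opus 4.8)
The plan is to verify that $\alpha^\ast := \sup_i \alpha_i = \max\{\alpha_1,\alpha_2\}$ satisfies the defining inequality of Definition~\ref{def:generalized_curvature} for the function $h := f+g$; since the curvature of $h$ is \emph{by definition} the smallest such scalar, this will immediately give the claim $\alpha \le \alpha^\ast$. So I would fix an arbitrary admissible triple, i.e. sets $S,\Omega\subseteq V$ and an element $\omega\in S\setminus\Omega$, and abbreviate $T:=S\setminus\{\omega\}$ and $T':=(S\cup\Omega)\setminus\{\omega\}$, so that $T\subseteq T'$ and $\omega\notin T'$. Writing $\rho^f_\omega$ and $\rho^g_\omega$ for the marginal-value operators of $f$ and $g$, the single structural fact I need is that marginal values are additive, $\rho^{h}_\omega(\cdot)=\rho^f_\omega(\cdot)+\rho^g_\omega(\cdot)$, which is immediate from $h=f+g$.

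With this additivity the goal reduces to proving $\rho^f_\omega(T')+\rho^g_\omega(T')\ge(1-\alpha^\ast)\big(\rho^f_\omega(T)+\rho^g_\omega(T)\big)$. I would assume without loss of generality that $\alpha_1\ge\alpha_2$, so $\alpha^\ast=\alpha_1$. The contribution of $f$ then transfers verbatim, since Definition~\ref{def:generalized_curvature} applied to $f$ already reads $\rho^f_\omega(T')\ge(1-\alpha^\ast)\rho^f_\omega(T)$. It therefore remains to establish the analogous bound for $g$ but with the \emph{larger} exponent, $\rho^g_\omega(T')\ge(1-\alpha^\ast)\rho^g_\omega(T)$, starting only from the hypothesis $\rho^g_\omega(T')\ge(1-\alpha_2)\rho^g_\omega(T)$. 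Summing the two parts, everything collapses to the single inequality $(\alpha^\ast-\alpha_1)\rho^f_\omega(T)+(\alpha^\ast-\alpha_2)\rho^g_\omega(T)\ge 0$, in which both coefficients are non-negative by the choice of $\alpha^\ast$.

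The main obstacle is the \emph{sign} of the marginal values: passing from the factor $(1-\alpha_2)$ to the smaller factor $(1-\alpha^\ast)$ preserves the inequality only when $\rho^g_\omega(T)\ge 0$. Here I would invoke the characterisation recorded after Definition~\ref{def:generalized_curvature}, namely that a function is monotone if and only if it has positive curvature (curvature at most $1$), and that monotone functions have non-negative marginal values. Thus whenever $\alpha_2\le 1$ --- in particular in the intended applications where the summands are monotone, such as the social-welfare decomposition --- one has $\rho^g_\omega(T)\ge 0$ and the displayed inequality follows at once, the symmetric argument covering $f$. The genuinely delicate regime is when both summands have negative curvature, so that marginal values may be negative and the term-by-term comparison can break down; I expect this to be the crux. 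I would attempt to handle it not by splitting the sum but by arguing directly from the ratio (equivalently, supremum) form of the curvature, bounding $1-\rho^h_\omega(T')/\rho^h_\omega(T)$ on the triples with $\rho^h_\omega(T)>0$, so that the non-negativity of the individual marginals is no longer required.
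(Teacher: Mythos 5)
Your proposal follows essentially the same route as the paper's proof: additivity of marginals for $h=f+g$, termwise application of the curvature inequality to $f$ and to $g$, and then weakening each coefficient $(1-\alpha_i)$ to $(1-\sup_i\alpha_i)$. The one place where you go beyond the paper is exactly the step you flag as delicate: replacing $(1-\alpha_2)$ by the smaller factor $(1-\alpha^\ast)$ is only legitimate when the marginal $\rho^g_\omega(S\setminus\{\omega\})=g(S)-g(S\setminus\{\omega\})$ is non-negative, and the paper's proof performs this replacement with no comment, so you have identified a gap in the published argument rather than introduced one of your own. For what it is worth, the gap can be closed directly from Definition~\ref{def:generalized_curvature} as literally stated: taking $\Omega=\emptyset$ there yields $\alpha_2\,\rho^g_\omega(S\setminus\{\omega\})\ge 0$, so any summand with strictly positive curvature automatically has non-negative marginals, while a summand with $\alpha_2=0$ satisfies $\rho^g_\omega(T)\ge\rho^g_\omega(\emptyset)=g(\{\omega\})\ge 0$ by normalization and non-negativity; in either case the term $(\alpha^\ast-\alpha_2)\rho^g_\omega(S\setminus\{\omega\})$ is non-negative whenever its coefficient is, which is all the argument needs. (This observation sits uneasily with the paper's claim that functions of curvature $\alpha>1$ can be non-monotone --- the definition as written forces all marginals to be non-negative --- but that is a tension in the paper's definitions, not in your argument.) Your fallback plan of passing to a ratio form of the curvature for the ``both non-monotone'' regime is therefore unnecessary under the stated definition, and would in any case face the same sign obstruction; as a reconstruction of the intended proof, your write-up is the paper's proof plus the justification the paper omits.
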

In the case of a submodular function, it is possible to give a simple characterization of Definition \ref{def:generalized_curvature}. In fact, one can easily prove the following.
\begin{proposition}
\label{prop:curvature}
Let $f\colon 2^V \longrightarrow \mathbb{R}_{\geq 0}$ be a submodular function with curvature $\alpha$, as in \probl. Then,
\[
\alpha \leq 1 - \min_{\{S\subseteq V,\ \omega \in S\}}\frac{f(S) - f(S\setminus \{ \omega \})}{f(\omega) - f(\emptyset)},
\]
for all subsets $S \subseteq V$.
\end{proposition}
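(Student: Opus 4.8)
The plan is to exploit the fact that, by Definition~\ref{def:generalized_curvature}, $\alpha$ is the \emph{smallest} scalar for which the curvature inequality holds over all admissible triples $(S,\Omega,\omega)$. Consequently, to prove the claimed upper bound it suffices to take the explicit candidate
\[
\alpha^* \;:=\; 1 - \min_{\{S\subseteq V,\ \omega\in S\}}\frac{f(S) - f(S\setminus\{\omega\})}{f(\omega) - f(\emptyset)}
\]
and verify that $\alpha^*$ already satisfies the defining inequality for every $S,\Omega\subseteq V$ and every $\omega\in S\setminus\Omega$. Minimality of $\alpha$ then forces $\alpha\le\alpha^*$, which is exactly the statement.

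First I would unpack the two marginals appearing in the definition. Since $\omega\in S$, we have $\rho_\omega(S\setminus\{\omega\}) = f(S) - f(S\setminus\{\omega\})$, and since $\omega\in S\cup\Omega$ as well, we likewise get $\rho_\omega((S\cup\Omega)\setminus\{\omega\}) = f(S\cup\Omega) - f((S\cup\Omega)\setminus\{\omega\})$. The crucial observation is that this second expression is a marginal term of exactly the form appearing in the minimum defining $\alpha^*$, with the set $S\cup\Omega$ (which contains $\omega$, hence is an admissible choice) playing the role of $S$ there. Thus, directly from the definition of the minimum,
\[
f(S\cup\Omega) - f((S\cup\Omega)\setminus\{\omega\}) \;\ge\; (1-\alpha^*)\bigl(f(\omega) - f(\emptyset)\bigr).
\]

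The second ingredient is submodularity, which I would use in its diminishing-returns form: instantiating $f(U)+f(W)\ge f(U\cup W)+f(U\cap W)$ with $U = A\cup\{\omega\}$ and $W = B$ for $A\subseteq B$ and $\omega\notin B$ yields $\rho_\omega(A)\ge\rho_\omega(B)$. Taking $A=\emptyset$ and $B = S\setminus\{\omega\}$ gives $f(\omega) - f(\emptyset) = \rho_\omega(\emptyset)\ge\rho_\omega(S\setminus\{\omega\}) = f(S) - f(S\setminus\{\omega\})$; that is, the denominator occurring in the curvature inequality is dominated by $f(\omega)-f(\emptyset)$. Chaining this with the previous display, and using $1-\alpha^*\ge 0$ to preserve the inequality under multiplication, I obtain
\[
\rho_\omega((S\cup\Omega)\setminus\{\omega\}) \ge (1-\alpha^*)\bigl(f(\omega)-f(\emptyset)\bigr) \ge (1-\alpha^*)\,\rho_\omega(S\setminus\{\omega\}),
\]
which is precisely the curvature inequality for $\alpha^*$, completing the argument.

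The step I expect to be the main obstacle is the bookkeeping around signs and degenerate cases, rather than the core chain of inequalities. Specifically, passing from $f(\omega)-f(\emptyset)\ge\rho_\omega(S\setminus\{\omega\})$ to its multiple by $1-\alpha^*$ requires $1-\alpha^*\ge 0$, which holds in the monotone (positive-curvature) regime where each ratio in the minimum lies in $[0,1]$ by submodularity and non-negativity of the marginals; one must also ensure the denominators $f(\omega)-f(\emptyset)$ are strictly positive for the minimum to be well defined, which is arranged by handling separately those $\omega$ with $f(\omega)=f(\emptyset)$ (these contribute vanishing marginals and are immaterial to the curvature). Verifying that these are the only edge cases, and that the set of admissible scalars is upward closed so that ``smallest valid scalar'' is meaningful, is where I would concentrate the care.
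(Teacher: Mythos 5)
Your argument is correct and is essentially the paper's own proof: both instantiate the minimum at the admissible pair $(S\cup\Omega,\omega)$ and then invoke submodularity in the form $\rho_\omega(\emptyset)\geq\rho_\omega(S\setminus\{\omega\})$ to recover the denominator appearing in the curvature definition, concluding by minimality of $\alpha$. You are in fact more explicit than the paper's one-line chain of ratio inequalities about the sign condition $1-\alpha^*\geq 0$ and the degenerate denominators, which that chain silently assumes.
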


\section{Approximation Guarantees}
\label{sec:approx_greedy}
We give approximation guarantees for \greedy on \probl, when optimizing a (non-monotone) submodular function with bounded curvature $\alpha$. Our proof technique generalizes the results of \cite{DBLP:journals/dam/ConfortiC84} to non-monotone functions $f$ by utilizing the notion of curvature. We have the following theorem.
\begin{theorem}
\label{thm:det_greedy_algorithm}
Let $f$ be a submodular function with curvature $\alpha$. \greedy is a $\frac{1}{\alpha} ( 1 - e^{-\alpha \overline{d}/d})$-approximation algorithm for \probl with run-time in $\bigo{dn}$.
\end{theorem}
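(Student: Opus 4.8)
The statement splits into a runtime claim and an approximation claim. The runtime is immediate: \greedy runs for $d = \sum_i d_i$ iterations, and in each it scans the at most $n$ elements of $V$, spending one oracle call per element to evaluate a marginal gain, so the total is $\bigo{dn}$. For the approximation ratio, the plan is to track the potential $h_t = f(\opt) - \alpha f(S_t)$ along the greedy trajectory and to establish a one-step contraction $h_t \le (1 - \alpha/d)\,h_{t-1}$ valid for the first $\overline{d}$ steps. Iterating from $h_0 = f(\opt)$ (recall $f(S_0) = f(\emptyset) = 0$) gives $h_{\overline{d}} \le (1-\alpha/d)^{\overline{d}} f(\opt) \le e^{-\alpha \overline{d}/d} f(\opt)$, which rearranges to $f(S_{\overline{d}}) \ge \frac{1}{\alpha}(1 - e^{-\alpha \overline{d}/d}) f(\opt)$. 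Because Reduction~\ref{reduction} supplies a feasible ``dummy'' element of marginal value $0$ at every step, \greedy never commits to a negative marginal, so $f(S_t)$ is non-decreasing and the output obeys $f(S_d) \ge f(S_{\overline{d}})$, yielding the theorem.

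The contraction reduces to the per-step inequality $d\,\rho_t \ge f(\opt) - \alpha f(S_{t-1})$. First I would exploit the partition structure to pass from greedy's local choice to the optimum. The key observation is that \emph{no block is saturated during the first $\overline{d}$ steps}: at the start of step $t \le \overline{d}$ we have $|S_{t-1}\cap B_i| \le |S_{t-1}| = t-1 < \overline{d} \le d_i$ for every $i$, so every element of $\opt \setminus S_{t-1}$ is a feasible candidate. Hence greedy's gain dominates each optimal marginal, $\rho_t \ge \rho_\omega(S_{t-1})$ for all $\omega \in \opt \setminus S_{t-1}$, and combining this with the submodularity bound $f(\opt \cup S_{t-1}) - f(S_{t-1}) \le \sum_{\omega \in \opt \setminus S_{t-1}} \rho_\omega(S_{t-1})$ and $|\opt| \le d$ gives $d\,\rho_t \ge f(\opt \cup S_{t-1}) - f(S_{t-1})$. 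This is exactly where the factor $\overline{d}/d$ originates, since the feasibility argument only survives for $\overline{d}$ of the $d$ iterations.

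It then remains to replace $f(\opt \cup S_{t-1})$ by $f(\opt)$, the only place non-monotonicity and curvature enter. I would prove the auxiliary bound
\[
f(\opt \cup S_{t-1}) \ge f(\opt) - (\alpha-1)\,f(S_{t-1})
\]
by adding the greedy elements $s_1,\dots,s_{t-1}$ on top of $\opt$ one at a time and applying Definition~\ref{def:generalized_curvature} to each, in the form $\rho_{s_j}(\opt \cup \{s_1,\dots,s_{j-1}\}) \ge (1-\alpha)\,\rho_{s_j}(\{s_1,\dots,s_{j-1}\})$; telescoping, and then using $\alpha \ge 1$ together with the non-negativity of the greedy marginals $\rho_{s_j}(\{s_1,\dots,s_{j-1}\}) = \rho_j \ge 0$ to bound the resulting sum below by $(1-\alpha)\,f(S_{t-1})$. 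Substituting this into $d\,\rho_t \ge f(\opt\cup S_{t-1}) - f(S_{t-1})$ produces $d\,\rho_t \ge f(\opt) - \alpha f(S_{t-1}) = h_{t-1}$, whence $h_t = h_{t-1} - \alpha\rho_t \le (1-\alpha/d)\,h_{t-1}$, closing the recursion.

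I expect the curvature step to be the main obstacle. For non-monotone $f$ the quantity $f(\opt) - f(\opt \cup S_{t-1})$ can be genuinely positive, and controlling it forces one to apply the curvature inequality along a carefully ordered chain and to handle correctly the sign reversal induced by $1-\alpha \le 0$, so that an \emph{upper} bound on a sum of marginals becomes the needed \emph{lower} bound after multiplication. A related subtlety is that only the greedy elements outside $\opt$ contribute to $f(\opt\cup S_{t-1})-f(\opt)$, so dominating the telescoped sum by $f(S_{t-1})$ again relies on the non-negativity of every greedy marginal guaranteed by the dummy elements of Reduction~\ref{reduction}. Finally, the uniform-matroid case $k=1$ is recovered by observing that then $\overline{d}=d$, which reproduces the classical $\frac{1}{\alpha}(1-e^{-\alpha})$ guarantee.
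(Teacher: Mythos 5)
Your argument is sound, and considerably more elementary than the paper's, on the range $\alpha \ge 1$ --- which is precisely the non-monotone regime the paper emphasizes. Your per-step bound $d\rho_t \ge f(\opt\cup S_{t-1}) - f(S_{t-1})$ is the paper's inequality \eqref{eq:thm_negative_curv1}, and your telescoped curvature estimate is exactly Lemma~\ref{lemma:marginal_value}. The difference is what happens next: you discard the contribution of the steps $J$ at which \greedy happens to pick an element of \opt, using $1-\alpha\le 0$ and $\rho_j\ge 0$ to get $(1-\alpha)\sum_{j\notin J}\rho_j \ge (1-\alpha)f(S_{t-1})$, whereas the paper keeps $J$ and runs an LP argument (Lemmas~\ref{lemma:triangular_matrix} and~\ref{lemma:triangluar_matrix_simplified}) to show that $J=\emptyset$ is the worst case. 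For $\alpha\ge 1$ your shortcut is valid and replaces the LP machinery with a one-line potential recursion. (Two small points to make explicit: Reduction~\ref{reduction} gives $d>\alpha$, so $1-\alpha/d>0$ and the contraction can be iterated; and if $h_t=f(\opt)-\alpha f(S_t)$ ever becomes negative you are already done, since then $f(S_t)>f(\opt)/\alpha$ and $f(S_t)$ is non-decreasing.)

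The gap is the range $\alpha<1$, which the theorem also covers --- the paper explicitly notes that for monotone $f$ it recovers the Conforti--Cornu\'ejols guarantee. There your auxiliary inequality $f(\opt\cup S_{t-1})\ge f(\opt)-(\alpha-1)f(S_{t-1})$ is simply false: take $S_{t-1}\subseteq\opt$ with $f(S_{t-1})>0$; the left side equals $f(\opt)$ while the right side is $f(\opt)+(1-\alpha)f(S_{t-1})>f(\opt)$. The sign of $1-\alpha$ flips, so dropping the $J$-terms now weakens rather than strengthens the bound, and the most your chain of inequalities yields is $d\rho_t\ge f(\opt)-f(S_{t-1})$, i.e.\ the classical $1-e^{-\overline{d}/d}$ guarantee instead of the strictly stronger $\frac{1}{\alpha}(1-e^{-\alpha\overline{d}/d})$ claimed for $\alpha<1$. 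Closing this requires the ingredient your proposal skips: when greedy has already absorbed optimal elements, the candidate pool shrinks to $\absl{M\setminus S_{t-1}}=d-\absl{[t-1]\cap J}$, and this shrinkage must be shown to compensate for the now-positive term $(1-\alpha)\sum_{j\in J}\rho_j$ --- which is exactly what the paper's LP lemmas establish by proving that the minimum of $\sum_m x_m$ over the constraint system is attained at $J=\emptyset$.
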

Note that if $f$ is monotone, then our approximation guarantee matches the approximation guarantee of Conforti and Cornu{\'{e}}jols, which is known to be nearly optimal \cite{DBLP:journals/dam/ConfortiC84,V10}, in the uniform matroid case. Furthermore, in the non-monotone case our lower-bound may yield significant improvement over state-of-the-art known bounds \cite{DBLP:conf/soda/BuchbinderFNS14,BF18}. Particularly, we beat the $1/e$-approximation barrier on functions with curvature $\alpha \leq 2.49375$ and the $1/2$-approximation barrier on functions with curvature $\alpha \leq 1.59362$. 

We give some approximation guarantee for \greedy, assuming that the function $f$ is monotone subadditive. Our proof method further generalizes the proof of \cite{DBLP:journals/dam/ConfortiC84}.  The following theorem holds.
\begin{theorem}
\label{thm:submodular_function_positive2}
Let $f$ be a monotone subadditive function with curvature $\alpha \in [0, 1]$, and suppose that $f(\emptyset) = 0$. Then \greedy is a $\frac{1}{\alpha} ( 1 - e^{(\alpha^2 -\alpha )\overline{d}/d})$-approximation algorithm for \probl with run-time in $\bigo{dn}$.
\end{theorem}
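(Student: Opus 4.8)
The run-time bound is immediate: the main loop performs $d=\sum_i d_i$ rounds (Reduction~\ref{reduction} guarantees that a feasible element --- possibly a dummy --- is always available, so \greedy never stalls before it has selected $d$ elements), and each round scans the $n$ elements of $V$ once to find the best feasible marginal, for $\bigo{dn}$ oracle calls in total. For the approximation ratio I would first record two consequences of the curvature. Specializing Definition~\ref{def:generalized_curvature} with the inner set equal to $\{\omega\}$ gives $\rho_\omega(\Omega)\ge(1-\alpha)f(\{\omega\})$ for every $\Omega$ with $\omega\notin\Omega$; in particular, every element that \greedy selects contributes a marginal at least $(1-\alpha)$ times its singleton value. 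Combined with subadditivity, which yields $f(\opt)\le\sum_{o\in\opt}f(\{o\})$, this lets me move freely between the objective $f$ and the singleton values $f(\{o\})$, which is essentially the only handle subadditivity provides.

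The heart of the proof is a one-step inequality of the form
\[
\rho_t \;\ge\; \frac{(1-\alpha)\,\overline d}{d^2}\bigl(f(\opt)-\alpha f(S_{t-1})\bigr).
\]
I would derive it in two moves. First, since \greedy takes the best feasible marginal and there are at most $|\opt|\le d$ feasible optimal elements at step $t$, the gain $\rho_t$ is at least the average of $\rho_o(S_{t-1})$ over the feasible $o\in\opt\setminus S_{t-1}$, which by the curvature bound above is at least $\tfrac{1-\alpha}{d}\sum_{o\text{ feasible}}f(\{o\})$. Second --- and this is the partition-matroid step --- I would show that the feasible optimal singletons still carry a $\tfrac{\overline d}{d}$ fraction of the adjusted optimum, i.e. $\sum_{o\text{ feasible}}f(\{o\})\ge\tfrac{\overline d}{d}\bigl(f(\opt)-\alpha f(S_{t-1})\bigr)$. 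Starting from $f(\opt)\le\sum_{o\in\opt}f(\{o\})$, I would subtract the singleton mass of the optimal elements that are unavailable at step $t$: those already chosen, and those lying in a saturated block $B_i$. The already-chosen ones are discounted against $f(S_{t-1})$ using the selected-element curvature bound; the infeasible ones are charged, through an injection justified by $|\opt\cap B_i|\le d_i$, to the distinct \greedy elements of the saturated block $B_i$ (each such optimal element was feasible until $B_i$ filled, so some greedy marginal dominates its discounted singleton value). The block sizes supplied by Reduction~\ref{reduction} are exactly what make this matching possible.

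With the one-step inequality in hand, I would track the potential $\Psi_t=f(\opt)-\alpha f(S_t)$, for which $\Psi_0=f(\opt)$ since $f(\emptyset)=0$. The inequality rearranges to $\Psi_t\le\bigl(1-\tfrac{\alpha(1-\alpha)\overline d}{d^2}\bigr)\Psi_{t-1}$, whose factor is strictly positive because $\alpha(1-\alpha)\le\tfrac14$. Telescoping over the $d$ rounds and using $1-y\le e^{-y}$ yields $\Psi_d\le e^{-\alpha(1-\alpha)\overline d/d}f(\opt)=e^{(\alpha^2-\alpha)\overline d/d}f(\opt)$, and unpacking $\Psi_d=f(\opt)-\alpha f(S_d)$ gives exactly $f(S_d)\ge\tfrac1\alpha\bigl(1-e^{(\alpha^2-\alpha)\overline d/d}\bigr)f(\opt)$, as claimed.

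I expect the partition-feasibility bookkeeping to be the main obstacle: extracting both the $\overline d/d$ factor and the sharp $\alpha f(S_{t-1})$ discount (rather than a lossy $\tfrac{1}{1-\alpha}f(S_{t-1})$) out of the charging argument is delicate, since subadditivity gives no direct control over $f$ on unions and every estimate must be routed through singleton values and the two curvature inequalities. This is also precisely where the extra $(1-\alpha)$ gap to the submodular guarantee of Theorem~\ref{thm:det_greedy_algorithm} originates, namely in the weaker per-element bound $\rho_o(S_{t-1})\ge(1-\alpha)f(\{o\})$ that replaces the submodular marginal comparison available there.
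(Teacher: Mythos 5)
Your run-time accounting is fine, and the part of your plan that routes subadditivity through singleton values --- $\rho_\omega(\Omega)\geq(1-\alpha)f(\{\omega\})$ together with $\sum_{\omega\in\opt\setminus S}f(\{\omega\})\geq f(\opt\setminus S)\geq\rho_{\opt}(S)$ --- is exactly the paper's opening move (inequality \eqref{eq:thm_negative_curv0_subadditive}), and it is indeed the source of the extra $(1-\alpha)$ factor relative to Theorem~\ref{thm:det_greedy_algorithm}. The gap is your one-step inequality $\rho_t\geq\frac{(1-\alpha)\overline{d}}{d^2}\left(f(\opt)-\alpha f(S_{t-1})\right)$, which is not merely ``delicate'' but false. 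Take $f$ modular with a single element of weight $1$ and all others of weight $0$, $k=1$, $\overline{d}=d=2$, and $\alpha$ arbitrarily small: \greedy picks the heavy element first, so $\rho_2=0$, while your inequality demands $\rho_2\geq\frac{1-\alpha}{4}\left(f(\opt)-\alpha f(S_1)\right)\approx\frac{1}{4}f(\opt)>0$. The obstruction sits precisely where you flag it: the only available discount for the optimal elements already chosen is $f(\{\omega_m\})\leq\rho_m/(1-\alpha)$, which charges them against $\frac{1}{1-\alpha}f(S_{t-1})$, and no charging scheme can sharpen this to $\alpha f(S_{t-1})$ because the counterexample shows the target pointwise bound simply does not hold. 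Consequently the potential recursion $\Psi_t\leq\left(1-\frac{\alpha(1-\alpha)\overline{d}}{d^2}\right)\Psi_{t-1}$ has no valid derivation, and the telescoping collapses.

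The paper's proof never asserts a pointwise bound on the true $\rho_t$. Instead, each step $t$ contributes an \emph{aggregate} constraint coupling $x_t=\rho_t/f(\opt)$ to the whole history $x_1,\dots,x_{t-1}$ and to the set $J$ of steps at which optimal elements were selected: \eqref{eq:thm_negative_curv1_subadditive} combined with Lemma~\ref{lemma:marginal_value} gives $\frac{d-\absl{[t-1]\cap J}}{1-\alpha}x_t+\alpha\sum_{m\in[t-1]}x_m+(1-\alpha)\sum_{m\in[t-1]\cap J}x_m\geq 1$. The actual greedy trajectory is then a feasible point of the LP $R(\alpha,d,\overline{d},J)$, and the theorem follows by lower-bounding the \emph{minimum} of $\sum_t x_t$ over that feasible region: Lemma~\ref{lemma:triangular_matrix_subadditive} and Lemma~\ref{lemma:triangluar_matrix_simplified} show the minimum only decreases when $J=\emptyset$, and the resulting triangular system \eqref{eq:lp+} is solved explicitly as in Lemma~\ref{lemma:general_curvature}, yielding $\sum_t x_t\geq\frac{1}{\alpha}\left(1-e^{-\alpha(1-\alpha)\overline{d}/d}\right)$. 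The bookkeeping over $J$ --- i.e., over \emph{when} optimal elements enter the greedy solution --- is exactly the information your per-step potential discards, and it is what makes the worst case identifiable. If you want to salvage a recursion-style proof, the potential must be compared against $\rho_{\opt}(S_{t-1})$ (equivalently $f(S_{t-1}\cup\opt)-f(S_{t-1})$) rather than against $f(\opt)-\alpha f(S_{t-1})$, and you will be led back to tracking $J$ and hence to the LP.
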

To our knowledge, this is the first approximation guarantee for the simple \greedy maximizing a monotone subadditive function under partition matroid constraints.

\section{Applications}
\label{sec:applications}
\subsection{Maximizing the logarithm of determinant functions.}
\label{sec:positive_semidefinite}

An $n\times n$ matrix $\mathcal{P}$ is positive definite if $\mathcal{P}$ is symmetric and all its eigenvalues $\lambda_1, \dots, \lambda_n$ are strictly greater than $0$.
Consider such an $n \times n$ positive definite matrix $\mathcal{P}$. The \emph{determinant function} $\mbox{det}_\mathcal{P} : \{0, 1\}^n \rightarrow \mathbb{R}_{\geq 0}$, with input an array $\mathbf{x} \in \{0, 1\}^n$, returns the determinant of the square sub-matrix of $\mathcal{P}$ indexed by $\mathbf{x}$. We search for a sub-matrix of $\mathcal{P}$ that satisfies a partition or a cardinality constraint, and such that $\log \mbox{det}$ is maximal. 

Variations of this setting can be found in informative vector machines \cite{DBLP:conf/nips/LawrenceSH02} and in maximum entropy sampling problems \cite{DBLP:journals/jmlr/KrauseSG08}.

The constrained problem of maximizing $\log \mbox{det}_\mathcal{P}$ is studied in the context of maximizing submodular functions under a single matroid constraint with a continuous greedy and non-oblivious local search in \cite{DBLP:journals/mor/SviridenkoVW17}.

The problem of maximizing $\mbox{det}_\mathcal{P}$ under a cardinality constraint is studied in \cite{DBLP:conf/icml/BianB0T17}, when $\mathcal{P}$ is a matrix of the form $\mathcal{P} = \mathcal{I} + \sigma \Sigma$, with $\mathcal{I}$ the $n\times n$ identity matrix, $\Sigma$ a positive semidefinite matrix, and $\sigma > 0$ a scalar. In this case, the function $\det_\mathcal{P}$ is monotone, supermodular, and the submodularity ratio can be estimated in terms of the eigenvalues. Note that a matrix of the form $\mathcal{I} + \sigma \Sigma$ always has eigenvalues $\lambda_j \geq 1$.\\
We study the problem of maximizing $\mbox{det}_\mathcal{P}$ under a partition matroid constraint, assuming that $\mathcal{P}$ is positive definite with eigenvalues $\lambda_j \geq 1$.  We show that in this case the simple greedy algorithm is sufficient to obtain a nearly-optimal approximation guarantee. If $\log \mbox{det}_{\mathcal{P}}$ is non-constant, using Proposition~\ref{prop:curvature} we can upper bound the activity by $\alpha \leq 1 - 1/\lambda$, where $\lambda$ is the largest eigenvalue of $\mathcal{P}$ \cite{DBLP:journals/mor/SviridenkoVW17}. Thus, \greedy gives a $(1 - e^{1/\lambda -1})/\left (1 - 1/\lambda \right )$-approximation for \probl when $f = \log \mbox{det}_{\mathcal{P}}$ with runtime in $\bigo{nd}$. We do not assume that the eigenvalues are such that $\lambda_j > 1$, so our analysis applies to monotone as well as non-monotone functions. For instance, consider the function $\log \mbox{det}_{\mathcal{A}}$ with
\[
\mathcal{A} = \left ( 
\begin{array}{cc}
\delta & \sqrt{\delta - 1}\\
\sqrt{\delta - 1} & 1
\end{array}
\right )
\]
for all $\delta > 1$. In this case, the function $\log \det_{\mathcal{A}}$ is neither monotone, nor approximately monotone \cite{DBLP:conf/stoc/LeeMNS09,DBLP:journals/jmlr/KrauseSG08}. \greedy, nevertheless, finds a $\left (1 - 1/\delta \right )(1 - e^{1/\delta -1})$-approximation of the global optimum under uniform matroid constraints.

We can further generalize this result to more complex functions, by means of Proposition \ref{prop:sum_curvature}. For instance, let $f$ be the entropy function of a Gaussian process, as defined in \eqref{max:entropy}. Then the function $f$ is the sum of a linear term $((1 + \ln (2\pi))/2) \absl{S}$ and $1/2 \ln \det_{\Sigma}(S)$, for a positive semidefinite matrix $\Sigma$ with eigenvalues $\lambda_j \geq 1$. This function is submodular, because both terms are submodular. Moreover, the linear term has curvature $\alpha = 0$, and the function $1/2 \ln \det_{\Sigma}(S)$ has curvature $1 - 1/\lambda$, with $\lambda$ the largest eigenvalue of $\Sigma$. Hence, we combine Theorem \ref{thm:det_greedy_algorithm} with Proposition \ref{prop:sum_curvature} to conclude that \greedy is a $(1 - 1/\lambda)(1 - e^{1/\lambda - 1})$-approximation algorithm for \probl in the uniform case, with $f$ the entropy as in \eqref{max:entropy}. Note that our analysis does not require monotonicity, and it holds for matrices such as $\Sigma = \mathcal{A}$. 

\subsection{Finding the maximum directed cut of a graph.}
\begin{figure}[t]
\caption{We consider a bipartite graph $G = (V, E)$ of order $n$ and size $n$, with partitions labeled as $A$ and $B$. In this example, there's only one node in $A$, and $n- 1$ nodes in $B$. Moreover, there's only one edge from $A$ to $B$, whereas there is one edge from each node in $B$ to $A$. Since all nodes in $A$ and $B$ have equal $f$-value, then \greedy may output $A$ as a solution to the maximum cut under uniform constrain of size $d$. This yields a $1/d$-approximation of the global optimum.
}
\label{fig:bad_example}
\includegraphics[width=\linewidth]{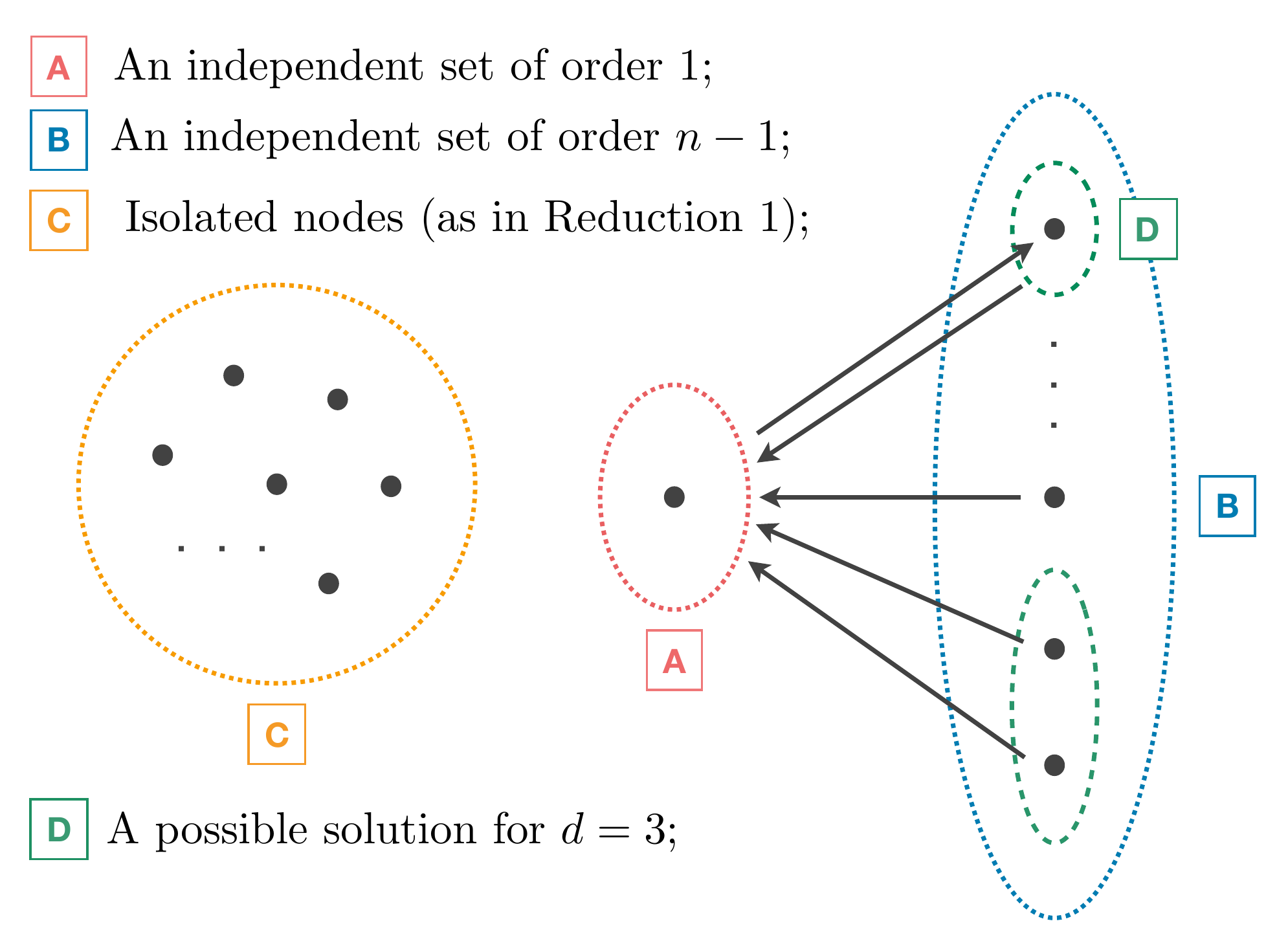}
\end{figure}

Let $G = (V, E)$ be a graph with $n$ vertices and $m$ edges, together with a non-negative weight function $w\colon E \rightarrow \mathbb{R}_{\geq 0}$. We consider the problem of finding a subset $U \subseteq V$ of nodes such that the sum of the weights on the outgoing edges of $U$ is maximal. 
This problem is the maximum directed cut problem known to be $\mathsf{NP}$-complete. We consider a constrained version of this problem, as in \probl. We consider both directed and undirected graphs. We first define the cut function as follows.

Let $G = (V, E)$ and $w$ be as above.  For each subset of nodes $U \subseteq V$, consider the set $C(U) = \{ (e_1, e_2) \in E \colon e_1 \in U \mbox{ and } e_2 \notin U  \}$ of the edges leaving $U$. We define the \emph{cut function} $f\colon 2^V \rightarrow \mathbb{R}_{\geq 0}$ with  
$f(U) = \sum_{e \in C(U)}w(e).$

The constrained maximum directed cut problem can be approached by maximizing the cut function under a uniform cardinality constraint. Since we require the weights to be non-negative, this function is also non-negative. As noted in \cite{DBLP:journals/siamcomp/FeigeMV11}, the cut function is always submodular and, in general, non-monotone.

Denote with $\Delta^+$ the maximum out-degree of $G$, i.e. the maximum degree when counting \emph{outgoing} edges and denote with $\Delta^-$ the maximum in-degree of $G$, obtained by counting the \emph{incoming} edges only. Then from Proposition \ref{prop:curvature} the curvature of the corresponding cut function is upper-bounded as $\alpha \leq 1 + \frac{\Delta^- }{\Delta^+}$. When $G$ is undirected, $\Delta^- = \Delta^+$ and, therefore, $\alpha \leq 2$. Thus, Theorem \ref{thm:det_greedy_algorithm} yields that \greedy is a $1/2 (1 - e^{-2})$-approximation algorithm for the constrained maximum cut problem. This approximation guarantee improves as $d$ decreases.

When $G$ is a directed graph the approximation guarantee can drop to $1/d$. Consider a bipartite graph $G=(V, E)$ with $n$ vertices and $n$ edges of weight~1 described as follows (see Figure~\ref{fig:bad_example}). Let $A, B$ be the partitions of $V$. $A$ contains exactly one node and $B$ contains $n - 1$ nodes. The unique vertex of $A$ has exactly one outgoing edge to a vertex in $B$. Each vertex in $B$ has an outgoing edge to the only vertex of $A$. When maximizing the cut function of this graph $G$ under the special case of cardinality constraint $d$, the optimal solution consists of $d$ nodes in $B$. \greedy though, may output $A$ as a possible solution, which yields only a $1/d$-approximation of the optimal solution. In this case the curvature is $\alpha \geq d$. However, we show experimentally that the \greedy performs well on a variety of real-world networks. We remark that in real-world networks the degree $\Delta_+$ is expected to grow in the problem size \cite{DBLP:journals/siamrev/Newman03,article_barbasi}.
\subsection{Social welfare in combinatorial auctions.}
We consider combinatorial auctions with $n$ players competing for $m$ items, where the items can have different values for each player. Moreover, the value of each item for a player may depend on the particular combination of items allocated to that player. For any given player $i = 1, \dots, n$, the value of a combination of items is expressed by the \emph{utility function} $u_i \colon 2^{[m]} \rightarrow \mathbb{R}_{\geq 0}$. The objective of the social welfare problem (SW) is to find disjoint sets $S_1, \dots, S_n$ maximizing the total welfare $\sum_{i = 1}^n u_i(S_i)$. Following  \cite{DBLP:conf/soda/BhawalkarR11}, we make the following natural assumptions on all utility functions:
\begin{enumerate}
\item $u_i(\emptyset) = 0$;
\item $u_i(U\cup T) \leq u_i(U) + u_i(T) \mbox{ for all } U, T \subseteq M$;
\item $u_i(U) \leq u_i(T) \mbox{ for all } U \subseteq T \subseteq M$.
\end{enumerate}
Since an explicit description of a utility function may require exponential space, we assume the existence of an \emph{oracle} that returns the values of $u_i$ for sets of items. In the literature, various oracle models have been considered \cite{DBLP:journals/mor/DobzinskiNS10}. We study the case where for each utility function $u_i$ and any set of items $S\subseteq M$ there exists an oracle that returns the value $u_i(S)$. We refer to this setting as \emph{value oracle model}. We remark that in the context of combinatorial auctions, the utility function $u_i$ of a player is unknown to other players. Thus players may choose not to reveal the true value of the cost functions. In this setting, however, we assume all players to be \emph{truthful}.

\begin{figure*}[ht]
\includegraphics[width=0.5\linewidth]{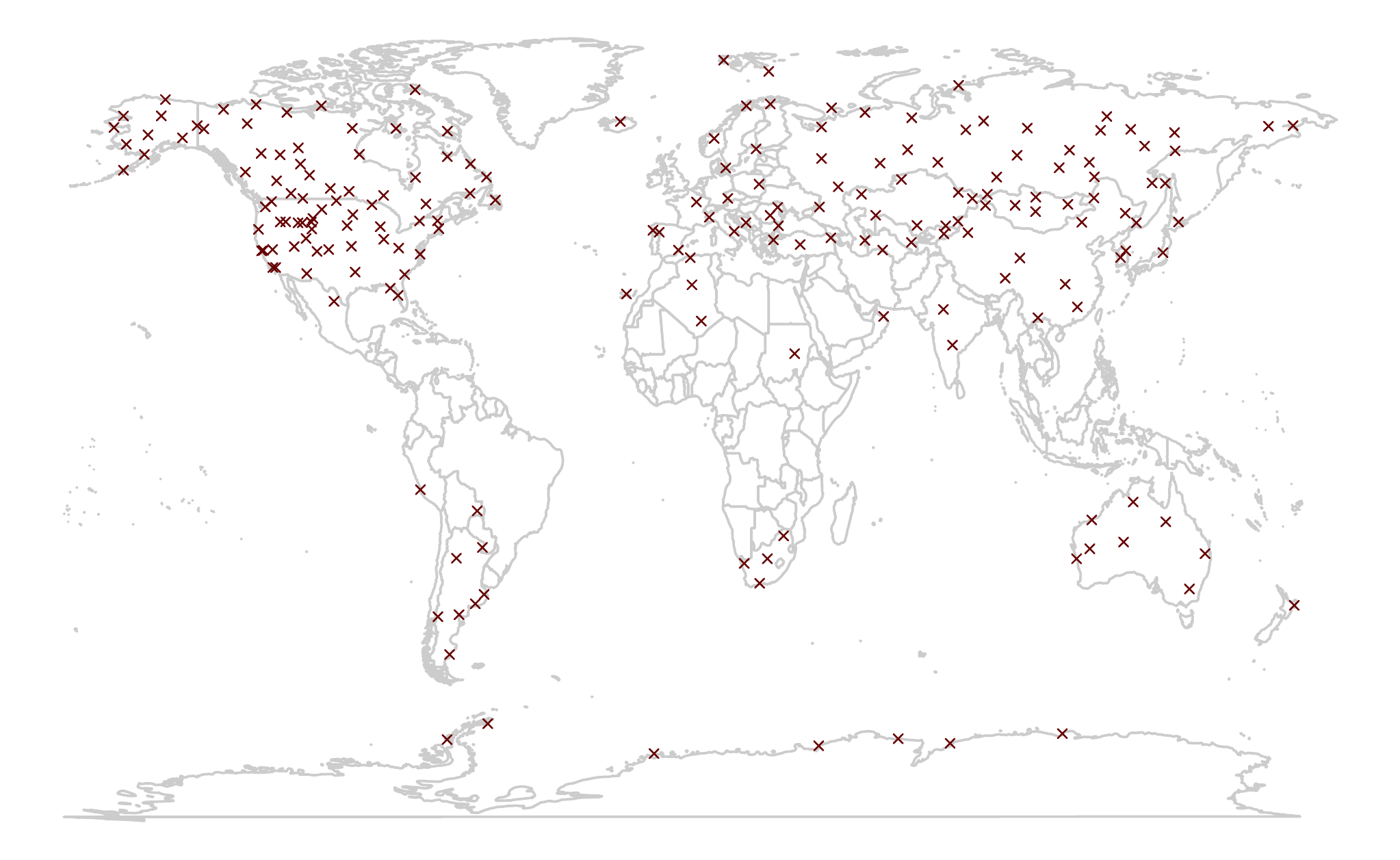}\hfill
\includegraphics[width=0.5\linewidth]{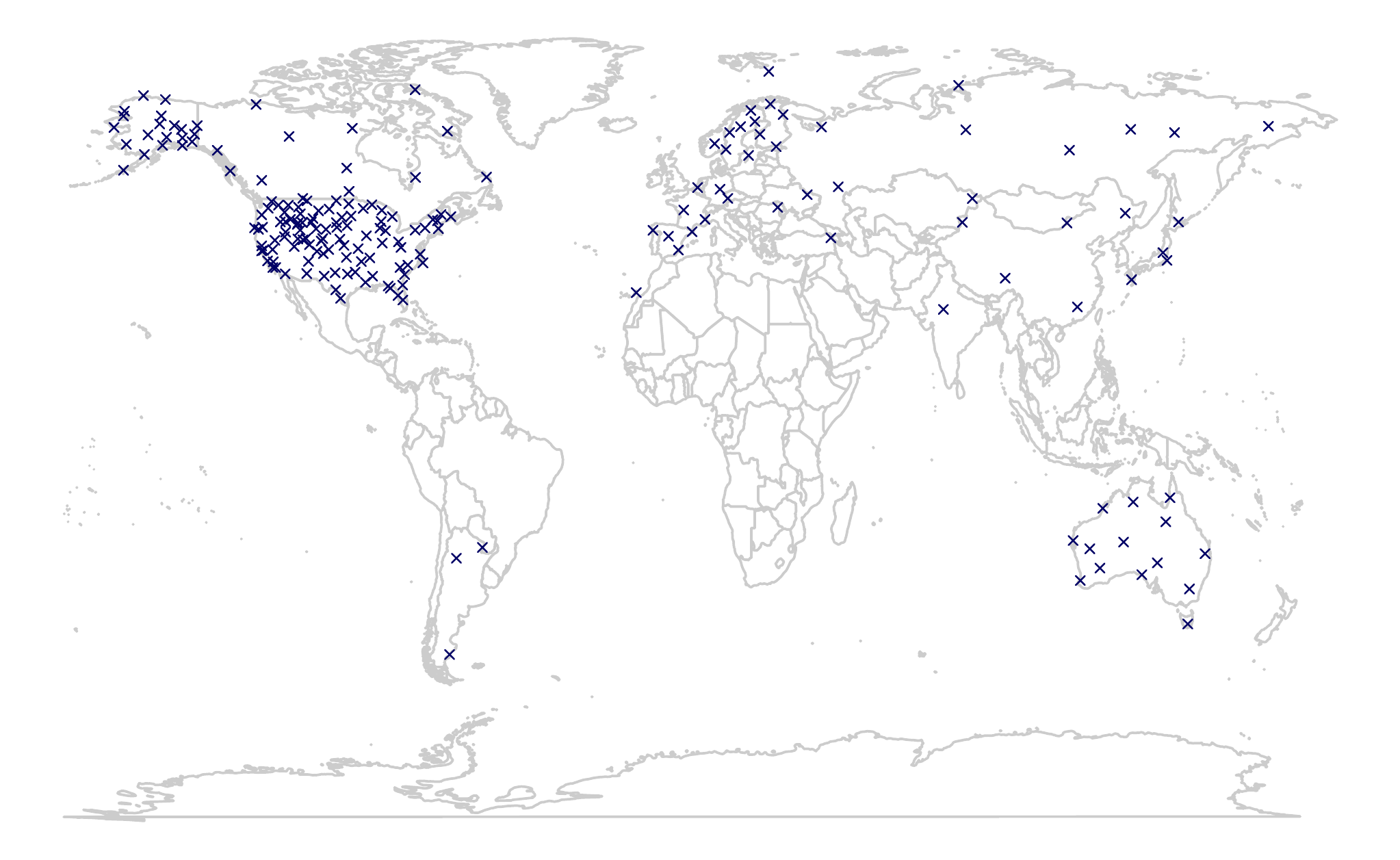}
\caption{A visualization of the solution found by \greedy for $d = 10\%$ in the case of a uniform constraint (left), and a partition constraint by countries (right). In both case, a solution is obtained by maximizing the entropy as given in \eqref{max:entropy}. The covariance matrix $\Sigma$ for all possible locations is displayed in Figure \ref{fig:covariance}. We observe that in the case of a cardinality constraint, the informative stations tend to be spread out, whereas in the partition constraint by countries they tend to be grouped in a few areas. We remark that in the original dataset stations are not distributed uniformly among countries. 
\label{fig:visualize_results}
}
\end{figure*}

We formalize SW as a maximization problem under a partition matroid constraint, following \cite{DBLP:journals/toc/FeigeV10}. For a given set of items $M$ and $n$ players, we define a ground set $X = [n] \times M$. The elements of $X$ are copies of the items in $M$. For each player we require a copy of each item in $M$. For each player $i$ we define a mapping $\pi_i :2^X \longmapsto 2^n$ that assigns copies of items to respective players. In other words, for each set $I \times S \subseteq X$ it holds $$\pi_i(I\times S) = \{\omega \in M \colon (i, \omega) \in I\times S \}.$$ Given  utility functions $u_1, \dots, u_n$, the social welfare problem (SW) consists then of maximizing the following function
\[
f(S) = \sum_{i = 1}^n u_i (\pi_i(S)).
\]
We note that the function $f$ is subadditive, monotone and such that $f(\emptyset) = 0$. In this setting a feasible solution $S$ cannot assign the same item to multiple players. Thus, if we define $B_m = [n] \times \{m\}$, for all items $m \in M$, then a feasible solution $S$ must fulfill the constrain $\absl{S \cup B_m} \leq 1$ for all $m \in M$. Thus, maximizing $f$ in the above setting is equivalent to maximizing a monotone function under a partition matroid constraint. 

Consider SW with $n$ players, $m$ items, and utility functions $u_1, \dots u_n$. Denote with $\alpha_i$ the curvature of each utility function $u_i$. Then the function $f$ has curvature $\alpha  \leq \max_{i} \alpha_i$, by iteratively applying Proposition \ref{prop:sum_curvature}. We can now apply Theorem~\ref{thm:submodular_function_positive2} and conclude that \greedy is a $1/\alpha \left (1 - e^{(\alpha^2 - \alpha)/M} \right )$-approximation algorithm for SW in the value oracle model.
\begin{figure}[t]
\caption{(a) A visualization of the monthly temperature variations of three time series, with particularly high variance. Each series corresponds to a unique station ID. We model each variation series as a Gaussian distribution.\\
(b) Optimal solution found by \greedy for a uniform constraint and a partition matroid constraint by countries. The $f$-value of each set of stations is the entropy \eqref{max:entropy}, with $\Sigma$ the covariance matrix of variation series as in (a) (see Figure \ref{fig:covariance}). 
}
\label{fig:results}
\includegraphics[width=\linewidth]{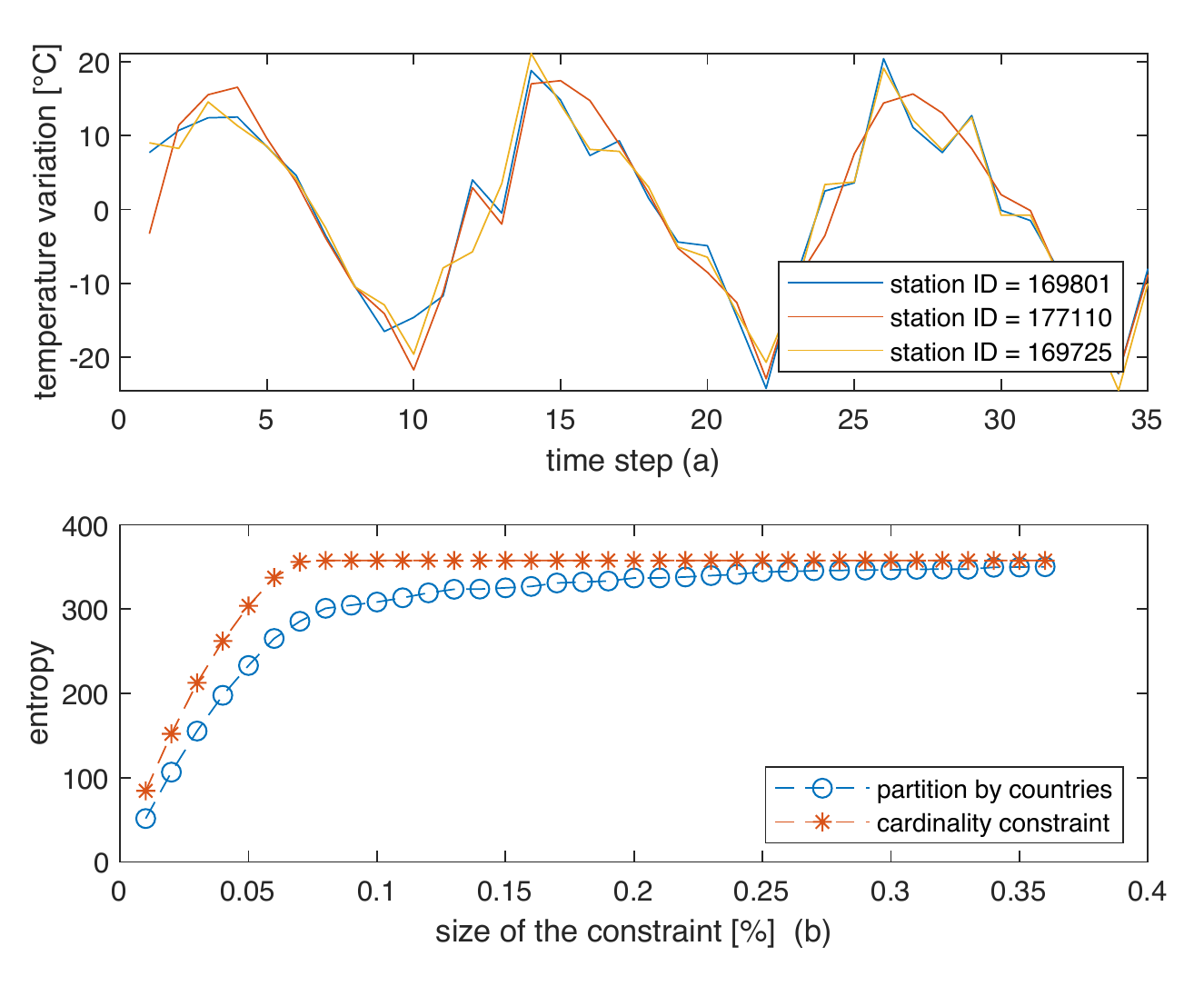}
\end{figure}
\begin{figure}[t]
\caption{A visualization of the covariance matrix $\Sigma$ of time series available in the Berkley Earth climate dataset. We consider stations that have full available reports between years 2015-2017, for a total of $2736$ stations. We consider the variation between average monthly temperatures of each time series. Each entry of this matrix is computed by taking the sample covariance as in \eqref{sample_covariance}.
}
\label{fig:covariance}
\includegraphics[width=\linewidth]{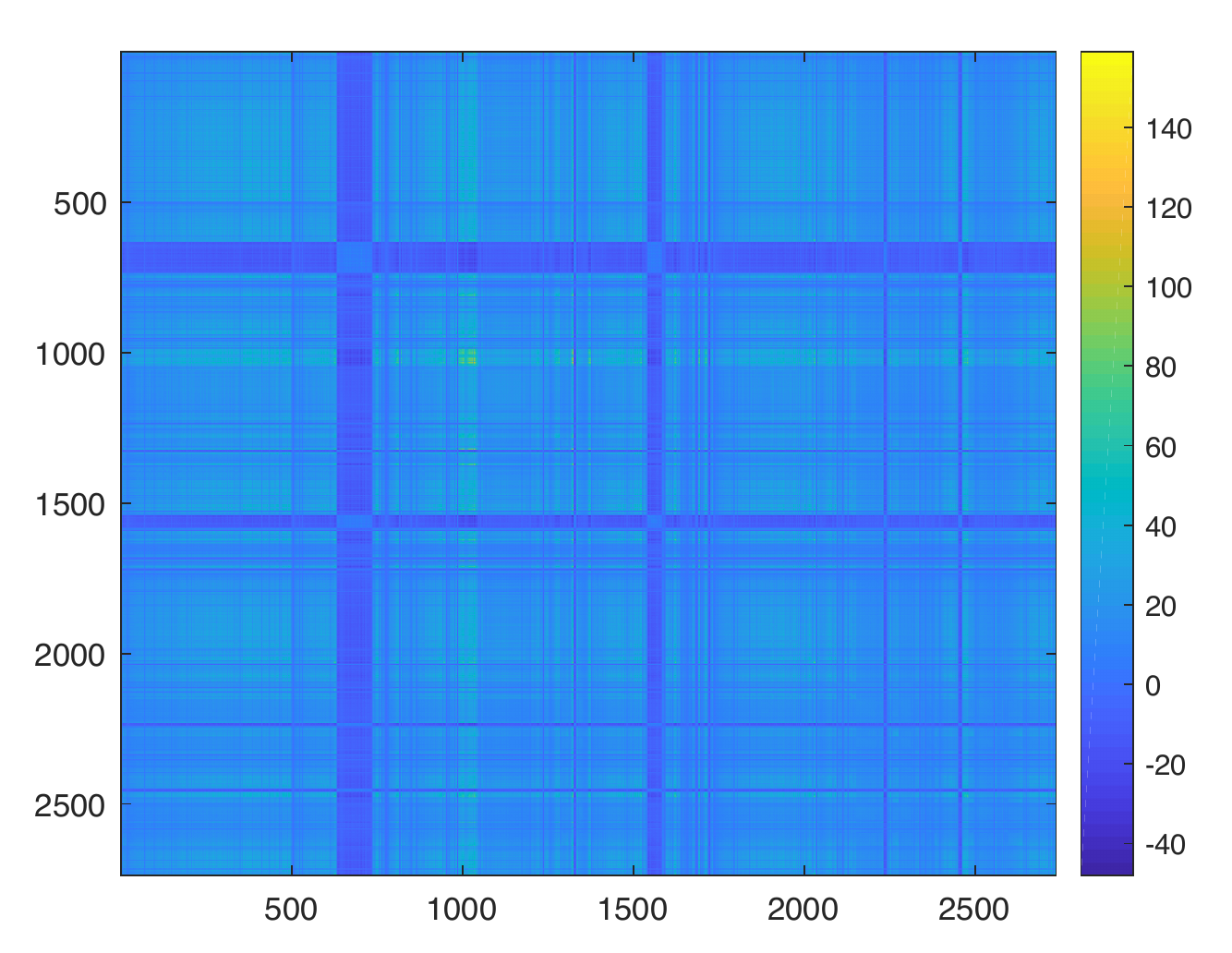}
\end{figure}

\section{Experiments}
\label{sec:experiments}

\subsection{The maximum entropy sampling problem.}

In this set of experiments we study the following problem: Given a set of random variables (RVs), find the most informative subset of variables, subject to a side constraint as in \probl. This setting finds a broad spectrum of applications, from Bayesian experimental design \cite{doi:10.1111/1467-9868.00225}, to monitoring spatio-temporal dynamics \cite{DBLP:journals/jair/SinghKGK09}.

We consider the Berkley Earth climate dataset \footnote{http://berkeleyearth.org/data/}. This dataset combines $1.6$ billion temperature reports from $16$ preexisting data archives, for over $39.000$ unique stations. For each station, we consider a unique time series for the \emph{average} monthly temperature. We always consider time series that span between years 2015-2017. This gives us a total of $2736$ time series, for unique corresponding stations. The code is available at [removed for review].

We study the problem of searching for the most informative sets of time series under various constraints, based on these observations. Given a time series $\mathbf{X} = \{X_t\}_t$ we study the corresponding variation series $\overline{\mathbf{X}} = \{\overline{X}_t\}_t$ defined as $\overline{X}_t = X_t - X_{t - 1}$. A visualization of time series $\overline{\mathbf{X}}$ is given in Figure \ref{fig:results}(a).

We compute the covariance matrix $\Sigma$ between series $\overline{\mathbf{X}}$, $\overline{\mathbf{Y}}$, the entries of which are defined as
\begin{equation}
\label{sample_covariance}
\mathsf{cov}(\overline{\mathbf{X}}, \overline{\mathbf{Y}}) = \frac{1}{m - 1} \sum_{t = 1}^m (\overline{X}_{t} - \expect{\overline{\mathbf{X}}})(\overline{Y}_{t} - \expect{\overline{\mathbf{Y}}}),
\end{equation}
with $m = 35$ the length of each series. A visualization of the covariance matrix $\Sigma$ is given in Figure \ref{fig:covariance}. 

Assuming that the joint probability distribution is Gaussian, we proceed by maximizing the \emph{entropy}, defined as
\begin{equation}
\label{max:entropy}
f(S) = \frac{1 + \ln (2\pi)}{2} \absl{S} + \frac{1}{2} \ln \mbox{det}_\Sigma (S)
\end{equation}
for any indexing set $S \subseteq \{0, 1\}^n$. 

\begin{figure*}[t]
\includegraphics[width=\linewidth]{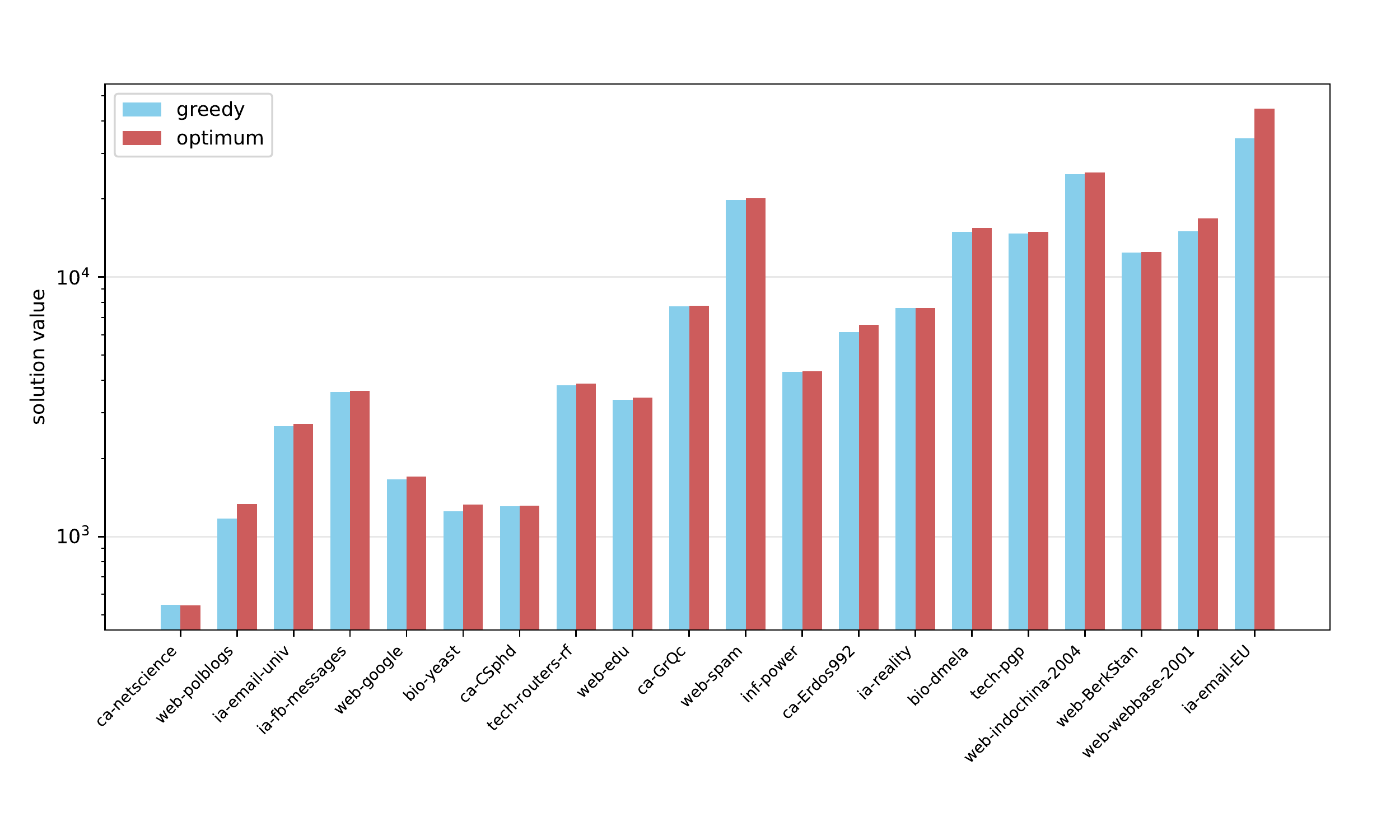}
\caption{Visualization of the optimal solution and the solution found by \greedy for the maximum directed cut problem.
The input graphs are ordered by increasing number of vertices from left to right.}
\label{fig:max-cut}
\end{figure*}

We consider two types of constraints. In a first set of experiments we consider the problem of maximizing the entropy as in \eqref{max:entropy}, under a cardinal constraint only. Specifically, given a parameter $d$, the goal is to find a subset of time series that maximizes the entropy, of size at most $d$ of all available data. We also consider a more complex constraint: Find a subset of time series that maximizes the entropy, and s.t. it contains at most $d$ of all available data of each country. The latter constraint is a partition matroid constraint, where each subset $B_i$ consists of all data series measured by stations in a given country.

A summary of the results is displayed in Figure \ref{fig:results}(b). We observe that in both cases the entropy quickly evolves to a stationary local  optimum,  indicating that a relatively small subset of stations is sufficient to explain the random variations between monthly observations in the model. We observe that the \greedy reaches similar approximation guarantees in both cases.  We remark that the \greedy finds a nearly optimal solution under a cardinality constraint, assuming that the entropy is (approximately) monotone \cite{DBLP:journals/jmlr/KrauseSG08}.

In Figure \ref{fig:visualize_results} we display solutions found by \greedy for the cardinality and partition matroid constraint, with $d = 10\%$. 

We observe that in the case of a cardinality constraint, the sensors spread across the map;  in the case of a partition matroid constraint sensors tend to be placed unevenly. We remark that in the original data set, some countries have a much higher density of stations than others.

\subsection{Finding the maximum directed cut of a graph.}
In this set of experiments we study the performance of \greedy for maximum directed cut in unweighted graphs. We compare these results with the optimal solutions, which we found via an Integer Linear Program solved with the state-of-the-art solver Gurobi~\cite{gurobi}.
The experiments were conducted on $20$ instances from Network Repository~\cite{network-repo}.

Figure~\ref{fig:max-cut} displays the quality of the solution found by \greedy compared to the optimal solution, in the unconstrained case. One can see that in most cases the greedy solution is very close to the optimum. This suggests that \greedy might perform well on real-world instances. We remark that the solution quality is expected to increase as the size of a possible constraint lowers. Thus, \greedy is expected to perform even better in the constrained case. 

Theorem~\ref{thm:det_greedy_algorithm} implies that this might be due to the curvature $\alpha$ of these graphs. However, we find that the solution quality of \greedy is much better than the theoretical upper bound on the curvature. 

\section{Conclusion}

In this paper we consider the problem of maximizing a function with bounded curvature under a single partition matroid constraint.

We derive approximation guarantees for the simple \greedy algorithm (see Algorithm \ref{alg:greedy}) on those problems, in the case of a (non-monotone) submodular function, and a monotone subadditive function (see Theorem \ref{thm:det_greedy_algorithm} and Theorem \ref{thm:submodular_function_positive2}). We observe that the lower bound on the approximation guarantee is asymptotically tight in the case of a submodular function.

We discuss three applications of our setting, and show experimentally that \greedy is suitable to approach these problems.

\section{Acknowledgement}
This research has been supported by the Australian Research Council (ARC) through grant DP160102401 and the German Science Foundation (DFG) through grant FR2988 (TOSU).
\bibliographystyle{aaai}\bibliography{bibliography}

\onecolumn

\section*{Appendix (Missing Proofs)}
\begin{proof}[Proof of Reduction \ref{reduction}]
Fix a constant $c > 0. $We observe that if the condition of the statement does not hold, then it is sufficient to add a set $D$ of $\sum_i cd_i$ " dummy" elements that do not have any effect on the $f$-values, and remove them from the output of the algorithm, for all $i = 1, \dots , k$. Denote with $D_1, \dots, D_k$ a partition of $D$ with $\absl{D_i} = cd_i$. This only increases the size of the instance by a multiplicative constant factor. We define new subsets $\overline{B}_i = B_i \cup D_i$ for all $i = 1, \dots, k$. Thus, we can maximize the function $f$ on the newly-defined partition constraint without affecting neither the global optimum, nor the value of the algorithm's output.
\end{proof}
\begin{proof}[Proof of Proposition \ref{prop:sum_curvature}]
Fix two subsets $S, \Omega \subseteq V$ of size at most $d$, and a point $\omega \in S \setminus \Omega$. From the definition of curvature we have
\begin{align*}
& f(S \cup \Omega) - f((S \cup \Omega) \setminus \{\omega \}) \geq (1 - \alpha_1) (f(S) - f(S \setminus \{ \omega \})),\\
& g(S \cup \Omega) - g((S \cup \Omega) \setminus \{\omega \}) \geq (1 - \alpha_2) (g(S) - g(S \setminus \{ \omega \})).
\end{align*}
Thus, we have that it holds
\begin{align*}
(f+ g)(S \cup \Omega) & - (f + g)((S \cup \Omega) \setminus \{\omega \})\\
& = f(S \cup \Omega) - f((S \cup \Omega) \setminus \{\omega \}) + g(S \cup \Omega) - g((S \cup \Omega) \setminus \{\omega \})\\
& \geq (1 - \alpha_1) (f(S) - f(S \setminus \{ \omega \})) + (1 - \alpha_2) (g(S) - g(S \setminus \{ \omega \}))\\
& \geq (1 - \sup \alpha_i) \left ( f(S) - f(S \setminus \{ \omega \}) + g(S) - g(S \setminus \{ \omega \}) \right )\\
& = (1 - \sup \alpha_i)\left ((f + g)(S) - (f + g)(S \setminus \{ \omega \}) \right ),
\end{align*}
as claimed.
\end{proof}
\begin{proof}[Proof of Proposition \ref{prop:curvature}]
Fix any two subsets $T, \Omega \subseteq V$, and let $\omega \in S \setminus \Omega$. Then, it holds
\begin{align*}
\alpha & \leq 1 - \min_{\{S\subseteq V,\ \omega \in S\}}\frac{\rho_{\omega}(S)}{\rho_{\omega}(\emptyset)} \leq 1 - \frac{\rho_{\omega}(T\cup \Omega)}{\rho_{\omega}(\emptyset)} \leq 1 - \frac{\rho_{\omega}(T\cup \Omega)}{\rho_{\omega}(T)},
\end{align*}
where the last inequality follows from the definition of submodular function.
\end{proof}
\begin{proof}[Proof of Theorem \ref{thm:det_greedy_algorithm}]
We assume without loss of generality that $f$ is non-constant. Moreover, due to Reduction \ref{reduction}, we may assume that $d > \alpha$. We perform the analysis until a solution of size $\overline{d}$ is found. This is not restrictive, since due to Reduction \ref{reduction}, the value $f(S_t)$ never decreases, for increasing $t$. Let $D$ be a set of dummy elements as in Reduction \ref{reduction}. Let $M$ be a set of size $\absl{M} = d$ of the form $M = \opt \cup \overline{D}$ with $\overline{D} \subseteq D$. We have that it holds
\begin{align}
\rho_t \geq \frac{1}{\absl{M \setminus S_{t - 1}}} \sum_{\omega \in M \setminus S_{t - 1}} \rho_{\omega} (S_{t - 1}) \geq \frac{\rho_{\opt}(S_{t - 1})}{\absl{M \setminus S_{t - 1}}},\label{eq:thm_negative_curv1}
\end{align}
where we have used that $S_{t - 1} \cup \omega$ is always a feasible solution, since $\absl{S_{t - 1} \cup \omega}<\overline{d}$ and $(S_{t - 1} \cup \omega) \subseteq \cup_j B_j$ for all $\omega \in \cup_j B_j$, and the second inequality follows from submodularity, together with the fact that $\absl{\opt} \leq d$. To continue, we consider the following lemma.
\begin{lemma}
\label{lemma:marginal_value}
Following the notation introduced above, define the set $J = \{t \in [\overline{d}] \colon \omega_t \in \opt \}$. Then it holds
\[
f(\omega_t \cup \opt) \geq f(\opt) + (1 - \alpha) \sum_{m \in [t]}\rho_m  - (1 - \alpha)\sum_{m \in [t]\cap J}\rho_m,
\]
for all $1 < t \leq \overline{d}$.
\end{lemma}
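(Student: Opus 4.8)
The plan is to establish the inequality by telescoping the quantity $f(S_t \cup \opt) - f(\opt)$ along the greedy trajectory and controlling each increment with the curvature bound of Definition~\ref{def:generalized_curvature}. (I read the left-hand side of the statement as $f(S_t \cup \opt)$, since $S_t = \{\omega_1, \dots, \omega_t\}$ is the set built by \greedy after $t$ steps, and only this set — rather than the single element $\omega_t$ — makes the telescope close.)

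First I would write
\[
f(S_t \cup \opt) - f(\opt) = \sum_{m = 1}^{t} \big( f(S_m \cup \opt) - f(S_{m-1} \cup \opt) \big),
\]
using $S_0 = \emptyset$, so that each summand isolates the effect of adding the single greedy element $\omega_m = S_m \setminus S_{m-1}$ to the set $S_{m-1} \cup \opt$.

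Next I would split the index set according to whether $\omega_m$ already lies in $\opt$. For $m \in J$, i.e. $\omega_m \in \opt$, we have $S_m \cup \opt = S_{m-1} \cup \opt$, so the corresponding summand is exactly $0$. For $m \notin J$, i.e. $\omega_m \notin \opt$, the summand equals the marginal value $\rho_{\omega_m}(S_{m-1} \cup \opt)$, and here I would invoke the curvature definition with $S = S_m$, $\Omega = \opt$, and $\omega = \omega_m$. The hypothesis $\omega \in S \setminus \Omega$ holds precisely because $\omega_m \in S_m$ while $\omega_m \notin \opt$. Using the set identities $(S_m \cup \opt) \setminus \{\omega_m\} = S_{m-1} \cup \opt$ and $S_m \setminus \{\omega_m\} = S_{m-1}$, the curvature bound reads $\rho_{\omega_m}(S_{m-1} \cup \opt) \geq (1 - \alpha)\,\rho_{\omega_m}(S_{m-1}) = (1-\alpha)\rho_m$.

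Finally I would sum these estimates: the $J$-indices contribute $0$, while each remaining index contributes at least $(1-\alpha)\rho_m$, yielding
\[
f(S_t \cup \opt) - f(\opt) \;\geq\; (1-\alpha) \sum_{m \in [t] \setminus J} \rho_m \;=\; (1-\alpha)\sum_{m \in [t]} \rho_m - (1-\alpha)\sum_{m \in [t]\cap J}\rho_m,
\]
which is exactly the claim once $f(\opt)$ is moved to the other side. The main obstacle is purely bookkeeping: verifying that the curvature hypothesis $\omega \in S \setminus \Omega$ is met under the split and that the two set identities above hold, both of which hinge on the case distinction $\omega_m \in \opt$ versus $\omega_m \notin \opt$. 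No analytic input beyond Definition~\ref{def:generalized_curvature} is required.
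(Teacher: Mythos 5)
Your proof is correct and follows essentially the same route as the paper's: telescope $f(S_t \cup \opt) - f(\opt)$ over the greedy steps, observe that steps with $\omega_m \in \opt$ contribute zero, and bound the remaining increments by $(1-\alpha)\rho_m$ via the curvature definition with $S = S_m$, $\Omega = \opt$, $\omega = \omega_m$. Your reading of the left-hand side as $f(S_t \cup \opt)$ matches the intended statement, and your version is in fact more careful than the paper's terse argument about verifying the hypothesis $\omega_m \in S_m \setminus \opt$ and the set identities that make the telescope close.
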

\begin{proof}
From the definition of curvature we have that $\rho_{\omega_m}(S_{t-1} \cup \opt) \geq (1 - \alpha) \rho_{m} $ for all $m \in [t] \setminus J$, and $\rho_{\omega_t}(S_{t-1} \cup \opt) \geq (1 - \alpha) \rho_{m} - (1 - \alpha) \rho_{m}$ for all $m \in J$. The claim follows by applying these two inequalities iteratively to the $f(S_{t - 1} \cup \opt)$.
\end{proof}
Define $x_{t} = \rho_t / \opt$ for all $t \in [\overline{d}]$. Note that it holds $d x_{1} \geq 1$. Furthermore, following the notation of Lemma \ref{lemma:marginal_value} we have that it holds $\absl{M \setminus S_t} = d - \absl{[t] \cap J}$, for all $t \in [\overline{d}]$. Combining Lemma \ref{lemma:marginal_value} with \eqref{eq:thm_negative_curv1} we get
\begin{equation}
\label{eq:linear_system1} 
(d - \absl{[t - 1]\cap J}) x_{t} + \alpha \sum_{m \in [t - 1]}x_{m} + (1 - \alpha)\sum_{m \in [t - 1]\cap J }x_{m} \geq 1.
\end{equation}
for all $1 < t\leq [\overline{d}]$. Note that the equations as in \eqref{eq:linear_system1} give an LP of the form\vspace{20pt}\\
$\quad $
\begin{equation*}
\quad \\
\quad \\
\quad \\
\quad \\
\begin{bmatrix}
    d      & & & & & & \\
    \alpha & (d - \absl{[1]\cap J}) & & & & & \\
    \alpha & 1 & (d - \absl{[2]\cap J}) & & & & \\
    \alpha & 1 & 1 & & & & \\
    \vdots & \vdots & \vdots & & & & \\
    \vdots & \vdots & \vdots & & & & \\
    \vdots & \vdots & \vdots & & & & \\
    \vdots & \vdots & \vdots & & & & \\
    \alpha & 1 & 1 & \cdots \quad & & (d - \absl{[\overline{d} - 1]\cap J}) &  \\   
    \alpha & 1 & 1 & \cdots \quad & & \alpha & (d- \absl{[\overline{d}]\cap J})
\end{bmatrix}
\begin{bmatrix}
    x_{1} \\
    x_{2} \\
    x_{3} \\
    \vdots \\
    \vdots \\
    \vdots \\ 
    \vdots \\ 
    x_{\overline{d} - 1} \\
    x_{\overline{d}} 
\end{bmatrix}
\geq
\begin{bmatrix}
    1 \\
    1 \\
    1 \\
    \vdots \\
    \vdots \\
    \vdots \\
    \vdots \\ 
    1 \\
    1 \\
\end{bmatrix}
\end{equation*}
\vspace{20pt}\\
that is completely determined by $\alpha$, $d$, $\overline{d}$ and $J$. We denote with $R(\alpha, d,\overline{d}, J)$ any such LP. In the following, we say that an array $(y_1, \dots, y_{\overline{d}}) = \overline{y} \in \mathbb{R}^{\overline{d}}_{\geq 0}$ is an optimal solution for $R(\alpha, d, \overline{d}, J)$ if $\overline{y}$ is feasible for $R(\alpha, d, \overline{d}, J)$, and if the sum $\sum_m y_m$ is minimal over all feasible solutions of $R(\alpha, d, \overline{d}, J)$. To continue with the proof we consider the following lemma.
\begin{lemma}
\label{lemma:triangular_matrix}
Following the notation introduced above, let $(y_1, \dots, y_{\overline{d}}) = \overline{y} \in \mathbb{R}^{\overline{d}}_{\geq 0}$ be an optimal solution of $R(\alpha, d, \overline{d}, J)$, and suppose that $J \neq \emptyset$. Then it holds $y_q \leq y_{q + 1}$ for all $q \in J$.
\end{lemma}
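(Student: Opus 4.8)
The plan is to reduce the study of a min-sum optimal solution $\overline y$ of the linear program $R(\alpha,d,\overline d,J)$ to the analysis of the triangular system obtained when all the inequalities \eqref{eq:linear_system1} hold with equality, and then to read off the monotonicity from a one-step recurrence. First I would argue that at an optimal (min-sum) solution the constraints indexed by $1,\dots,\overline d$ are tight, so that $\overline y$ is the forward-substitution solution of the lower-triangular system: the first row fixes $y_1=1/d$, and each subsequent row then determines $y_t$ in terms of $y_1,\dots,y_{t-1}$. Writing $a_t$ for the diagonal coefficient of row $t$ (so $a_1=d$ and $a_t=d-\absl{[t-1]\cap J}$ for $t>1$) and noting that the coefficient of $y_m$ in row $t$, for $m<t$, equals $1$ if $m\in J$ and $\alpha$ otherwise, the system is completely explicit.

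The key step is to subtract consecutive equalities. The crucial observation is that, for a fixed $m<t$, the coefficient of $y_m$ depends only on whether $m\in J$ and not on the row index $t$; hence when I subtract row $t$ from row $t+1$ all the ``history'' terms $y_1,\dots,y_{t-1}$ cancel, leaving the two-term relation
\[
a_{t+1}\,y_{t+1} \;=\; \bigl(a_t - c_{t+1,t}\bigr)\,y_t,
\]
where $c_{t+1,t}\in\{\alpha,1\}$ is the subdiagonal entry. I would then split into cases. If $t=q\in J$, the diagonal drops by exactly one, $a_{q+1}=a_q-1$, while the subdiagonal entry is $c_{q+1,q}=1$; the relation collapses to $a_{q+1}y_{q+1}=a_{q+1}y_q$, i.e. $y_{q+1}=y_q$, which in particular gives $y_q\le y_{q+1}$ as claimed. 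For completeness, when $t\notin J$ one has $a_{t+1}=a_t$ and $c_{t+1,t}=\alpha$, so $y_{t+1}=(1-\alpha/a_{t+1})y_t\le y_t$, i.e. the sequence is non-increasing across indices outside $J$.

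The main obstacle is the tightness claim, on which the direction of the inequality genuinely depends: if row $q$ were slack while row $q+1$ were tight, the same subtraction would instead yield $y_{q+1}\le y_q$, reversing the conclusion. I would therefore justify tightness carefully --- either by complementary slackness for the min-sum LP, or by a direct exchange argument showing that the forward-substitution solution is sum-minimal --- using the reduction's assumption $d>\alpha$ to keep the diagonal entries dominant. The same recurrence also certifies that the resulting $y_t$ are non-negative, so that $\overline y\in\mathbb{R}^{\overline d}_{\ge 0}$ is legitimate, since $a_{t+1}=d-\absl{[t]\cap J}\ge d-\overline d+1\ge 1$ for all $t\le\overline d-1$, and $\alpha<d$ keeps the factors $1-\alpha/a_{t+1}$ controlled. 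Once tightness is in hand, the recurrence makes the monotonicity immediate.
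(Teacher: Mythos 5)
Your reduction to the tight (forward-substitution) system is the crux, and unfortunately it fails: a min-sum optimal solution of $R(\alpha,d,\overline d,J)$ need not make all constraints tight, and the optimum need not be unique. Concretely, take $d=\overline d=2$ and $J=\{1\}$. The two rows read $2y_1\geq 1$ and $(2-\absl{[1]\cap J})\,y_2+\alpha y_1+(1-\alpha)y_1 = y_1+y_2\geq 1$, so the minimum of $y_1+y_2$ equals $1$ and is attained on the entire segment between $(1/2,1/2)$ and $(1,0)$. At the optimal point $(1,0)$ the first row is slack and, worse, $y_1>y_2$ even though $1\in J$, so the monotonicity you would read off from the tight system does not hold at that optimum. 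Neither of your proposed rescues closes this: complementary slackness forces row $s$ to be tight only when the $s$-th dual variable is positive, and here the dual optimum is $(u_1,u_2)=(0,1)$, so it says nothing about row $1$; and an exchange argument showing that the forward-substitution solution is sum-minimal would still only cover that one optimal solution, whereas the lemma quantifies over all of them. The degeneracy is not an artifact of small numbers: for $q\in J$ the diagonal entry $d-\absl{[q]\cap J}$ can drop to $1$ whenever $\overline d$ is close to $d$ (e.g.\ the uniform case $k=1$), which is exactly when increasing $y_q$ and decreasing the later coordinates trades off at rate one and slack optima appear.

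The paper argues in the opposite direction: it takes an arbitrary optimal $\overline y$ with $y_q>y_{q+1}$ for some $q\in J$, builds an explicit perturbation $\overline z$ (decrease $y_q$ by a chosen $\varepsilon$ and add $\varepsilon/(d-\absl{[q]\cap J})$ to every later coordinate), verifies feasibility row by row by comparing $L_{s+1}-L_s$ for $\overline z$ and $\overline y$, and then compares the sums to contradict minimality. Your subtraction-of-consecutive-rows computation is correct \emph{conditional} on tightness --- the history terms do cancel because the coefficient of $y_m$ depends only on whether $m\in J$ --- and it correctly predicts $y_{q+1}=y_q$ for the tight solution; but as written the argument establishes the inequality only for one particular optimal point and rests on a premise that is false in general. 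To salvage your route you would either need to prove the statement in the weaker, existential form that Lemma~\ref{lemma:triangluar_matrix_simplified} actually consumes (some sum-minimal feasible point satisfies the monotonicity), or supply a perturbation argument of the paper's type.
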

\begin{proof}
We proceed \emph{ad Absurdum}, by assuming that there exists a point $q \in J$ s.t. $y_{q} > y_{q + 1}$. Define the positive constant
\[
\varepsilon = \frac{d - \absl{[q] \cap J}}{d - \absl{[q - 1] \cap J}}(y_q - y_{q + 1}).
\]
Consider a vector $(z_1, \dots, z_{\overline{d}_i}) \in \mathbb{R}^{\overline{d}}_{\geq 0}$, defined as
\[
\left \{
\begin{array}{ll}
z_m = y_m & \mbox{ if } 1 \leq m < q;\\
z_m = y_m - \varepsilon & \mbox{ if } m = q;\\
z_m = y_m + \frac{\varepsilon}{d - \absl{[q] \cap J}} & \mbox{ if } q < m \leq \overline{d};\\
\end{array}
\right .
\]
We first observe that $(z_1, \dots, z_{\overline{d}})$ is a feasible solution for $R(\alpha, d, \overline{d}, J)$. Note that this is clearly the case for all coefficients $z_m$ with $1 \leq m < q$. Given a vector $(w_1, \dots, w_{\overline{d}})= \overline{w} \in \mathbb{R}^{\overline{d}}_{\geq 0}$, we define
\begin{equation}
\label{eq:system_line}
L_{s}(\overline{w}) = (d - \absl{[s - 1] \cap J}) w_{s} + \alpha \sum_{m \in [s - 1]} w_m + (1 - \alpha)\sum_{m \in [s - 1] \cap J} w_m.
\end{equation}
Essentially, $L_s(\overline{w})$ returns the value obtained by multiplying the $s$-th row of the constraint matrix of the LP $R(\alpha, d, \overline{d}, J)$ with the vector $\overline{w}$. Following this notation, we have that it holds
\begin{align*}
L_{q + 1}(\overline{y}) - L_{q}(\overline{z}) = (d - \absl{[q] \cap J}) y_{q+1} - (d - \absl{[q - 1] \cap J)}) z_q + y_q = 0.
\end{align*}
Since $(y_1, \dots, y_{\overline{d}})$ is a feasible solution, then $L_{q}(\overline{z}) = L_{q + 1}(\overline{y}) \geq 1$. Hence, the coefficients $z_1, \dots, z_q$ are feasible for the system $R(\alpha, d, \overline{d}, J)$. We now prove that the solutions $z_{q + 1}, \dots, z_{\overline{d}}$ are also feasible coefficients. We now proceed by proving that it holds $L_{s}(\overline{z}) - L_s (\overline{y}) \geq 0$, for all $s \in [\overline{d}] \setminus [q]$ with an induction argument on $s$. For the base case with $s = q + 1$, we have that
\[
L_{q + 1}(\overline{z}) - L_{q + 1}(\overline{y}) = \varepsilon +  (d - \absl{ [q] \cap J}) \frac{\varepsilon}{d - \absl{[q] \cap J}} \geq 0.
\]
For the inductive step, we consider two separate cases.\\ \\
(Case 1: $s \in J$) In this case, from \eqref{eq:system_line} it holds 
\begin{align*}
L_{s + 1}(\overline{w}) - L_{s}(\overline{w}) & = (d - \absl{[s] \cap J}) w_{s + 1} + w_{s} - (\hat{d} - \absl{[s - 1] \cap J}) w_{s}\\
& = (d - \absl{[s] \cap J})(w_{s + 1} - w_s),
\end{align*}
where we have used that $s \in J$. Hence, $\left (L_{s + 1}(\overline{z}) - L_{s}(\overline{z}) \right ) -  \left (L_{s + 1}(\overline{y}) - L_{s}(\overline{y}) \right ) = 0$.\\ \\
(Case 2: $s \notin J$) We use \eqref{eq:system_line} again, to show that it holds 
\begin{align*}
L_{s + 1}(\overline{w}) - L_{s}(\overline{w}) & = (d - \absl{[s] \cap J}) w_{s + 1} + \alpha w_{s} - (d - \absl{[s - 1] \cap J}) w_{s}\\
& = (d - \absl{[s - 1] \cap J})(w_{s + 1} - w_s) + \alpha w_s\\
& \geq (d - \absl{[s - 1] \cap J})(w_{s + 1} - w_s),
\end{align*}
where we have used that $\alpha, w_s \geq 0$. We conclude that $\left (L_{s + 1}(\overline{z}) - L_{s}(\overline{z}) \right ) -  \left (L_{s + 1}(\overline{y}) - L_{s}(\overline{y}) \right ) \geq 0$.\\ \\
Combining the two cases discussed above, we have that
\[
L_{s + 1}(\overline{z}) - L_{s + 1}(\overline{y}) \geq L_s(\overline{z}) - L_s(\overline{y})
\]
for all $s \in [\overline{d}] \setminus [q]$. Therefore, we use thy inductive hypothesis on the $L_s(\overline{z}) - L_s(\overline{y})$ and conclude that the sequence $(z_1, \dots , z_{\overline{d}})$ is a feasible solution for the system $J(\alpha, d, \overline{d}, J)$.\\ \\
We conclude the proof, by showing that $\overline{z}$ contradicts the assumption that $\overline{y}$ is minimal. To this end, we observe that it holds
\[
\sum_{m \in [\overline{d}]} (z_m - y_m)  = \left ( -1 + \frac{\overline{d} - q - 1}{\overline{d} - q}  \right ) \varepsilon \leq 0.
\]
Since the coefficients of the matrix in the linear system $R(\alpha, d, \overline{d}, J)$ are non-negative, this proves the claim. 
\end{proof}
The lemma above is useful, because it allows us to significantly simplify our setting. In fact, using Lemma \ref{lemma:triangular_matrix} we can prove the following result.
\begin{lemma}
\label{lemma:triangluar_matrix_simplified}
Following the notation introduced above, let $(y_1, \dots, y_{\overline{d}}) = \overline{y} \in \mathbb{R}^{\overline{d}}_{\geq 0}$ be an optimal solution of the system $R(\alpha, d, \overline{d}, J)$, and let $(z_1, \dots, z_{\overline{d}}) = \overline{z} \in \mathbb{R}^{\overline{d}}_{\geq 0}$ be an optimal solution of the system $R(\alpha, d, \overline{d}, \emptyset)$. Then it holds
\[
\sum_{m = 1}^{\overline{d}} y_m \geq \sum_{m = 1}^{\overline{d}} z_m.
\]
\end{lemma}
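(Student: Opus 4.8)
The plan is to prove the claim by induction on $\absl{J}$, peeling off one index at a time. The base case $J = \emptyset$ is trivial. For the inductive step, set $q = \max J$ and $J' = J \setminus \{q\}$, so that $\absl{J'} = \absl{J} - 1$. It then suffices to prove that the optimal value of $R(\alpha, d, \overline{d}, J)$ is at least that of $R(\alpha, d, \overline{d}, J')$: applying the inductive hypothesis to $J'$ telescopes the inequality all the way down to $R(\alpha, d, \overline{d}, \emptyset)$ and yields $\sum_m y_m \ge \sum_m z_m$.

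For the single step I would show that an optimal solution $\overline{y}$ of $R(\alpha, d, \overline{d}, J)$ is already \emph{feasible} for $R(\alpha, d, \overline{d}, J')$; optimality of $\overline{z}$ for the $J'$-system then gives $\sum_m z_m \le \sum_m y_m$ immediately. Comparing the two systems through the functional $L_s$ of \eqref{eq:system_line}, deleting $q$ from $J$ leaves every row $t \le q$ unchanged (so $\overline{y}$ still satisfies them) and, in every row $t > q$, raises the diagonal coefficient of $y_t$ by one while lowering the coefficient of $y_q$ from $1$ to $\alpha$. A one-line computation shows that for $t > q$ the $J'$-value of row $t$ equals its $J$-value plus $y_t - (1 - \alpha) y_q$. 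Since the $J$-value is at least $1$, feasibility for $J'$ reduces to the scalar inequality $y_t \ge (1 - \alpha) y_q$ for every tail index $q < t \le \overline{d}$.

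To settle this tail inequality I distinguish two cases. If $\alpha \ge 1$ then $(1 - \alpha) y_q \le 0 \le y_t$ and there is nothing to prove. If $\alpha \in [0, 1)$, I use that at the optimum every constraint of $R(\alpha, d, \overline{d}, J)$ is tight, so the tail of $\overline{y}$ is explicit: because $q \in J$ one has $y_{q+1} = y_q$, and for each later index (which lies outside $J$, since $q = \max J$) the tightness recurrence forces $y_{t+1} = (1 - \alpha/(d - \absl{J}))\, y_t$, whence $y_t = y_q (1 - \alpha/D)^{t - q - 1}$ with $D = d - \absl{J}$. The worst case is $t = \overline{d}$, with exponent $N - 1$ where $N = \overline{d} - q$. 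The crucial counting bound is $N \le D$: since $q = \max J$ forces $\absl{J} \le q$ and $\overline{d} \le d$, we get $N = \overline{d} - q \le d - \absl{J} = D$. As $\alpha < 1 \le D$ on a nonempty tail, Bernoulli's inequality gives $(1 - \alpha/D)^{D} \ge 1 - \alpha$, and dropping one factor only increases the left-hand side, so $y_t \ge y_q (1 - \alpha/D)^{D-1} \ge (1 - \alpha) y_q$, as required.

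The main obstacle is the step that lets me write the tail in closed form, namely that an optimal solution of the triangular LP makes every constraint tight. I would establish this by an exchange argument in the spirit of Lemma~\ref{lemma:triangular_matrix}: any slack in a constraint can be removed by lowering the corresponding variable (compensating downstream), strictly decreasing $\sum_m y_m$ and contradicting optimality; Lemma~\ref{lemma:triangular_matrix} additionally supplies the monotone behaviour $y_q \le y_{q+1}$ at indices of $J$, which the tightness recurrence sharpens to the equality $y_{q+1} = y_q$ used above. The remaining work is the elementary bookkeeping of the counts $\absl{[t-1] \cap J}$ and checking that the ratio $1 - \alpha/D$ is well defined on nonempty tails, both of which are routine once tightness of the optimum is in hand.
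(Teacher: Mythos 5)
Your overall strategy---peel the elements of $J$ off one at a time and show that an optimal solution of the larger system remains feasible for the smaller one, so that optimal values can only decrease along the chain down to $J=\emptyset$---is the same as the paper's, but the single peeling step is handled by a genuinely different mechanism. The paper first pushes the elements of $J$ rightward into a contiguous block ending at $\max J$ and then deletes the block from its \emph{left} end, so that for rows indexed inside the block the required inequality $y_s \geq (1-\alpha)y_p$ follows from the chain $y_p \leq y_{p+1} \leq \dots$ supplied by Lemma~\ref{lemma:triangular_matrix}; you delete $\max J$ directly and control the tail rows $t > q$ via an explicit closed form for the optimum plus Bernoulli. Your bookkeeping is correct: deleting $q$ from $J$ changes the $L_t$ of \eqref{eq:system_line} for $t>q$ by exactly $y_t - (1-\alpha)y_q$, the case $\alpha \geq 1$ is indeed free, and the counting bound $\overline{d}-q \leq d - \absl{J}$ is right. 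Your treatment of the rows \emph{beyond} $\max J$ is in fact more explicit than the paper's, which invokes only the monotonicity of Lemma~\ref{lemma:triangular_matrix} (available inside the block) and is silent about later rows.

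The genuine gap is the load-bearing claim that at an optimal solution of $R(\alpha,d,\overline{d},J)$ \emph{every} constraint is tight, which you use to derive the closed form $y_t = y_q(1-\alpha/D)^{t-q-1}$. The exchange argument you sketch does not yet deliver it. First, if row $s$ has slack you propose to lower $y_s$, but $y_s$ may already be $0$; this case must be excluded (e.g.\ by bounding $\sum_m y_m$ at the optimum so that the accumulated left-hand side of a later row can never exceed $1$ without the diagonal term). Second, lowering $y_s$ by $\delta$ breaks every subsequent constraint by up to $\delta$, and restoring feasibility downstream costs roughly $\delta\,(\overline{d}-s)/(d-\absl{[s]\cap J})$ in added mass; the claimed \emph{strict} decrease of $\sum_m y_m$ therefore requires verifying $\overline{d}-s < d - \absl{[s]\cap J}$, exactly the kind of computation that closes the proof of Lemma~\ref{lemma:triangular_matrix}, and this is not automatic from the problem data but from Reduction~\ref{reduction}. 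Third, the all-tight solution is only nonnegative (hence only a candidate optimum) when $\alpha < d - \absl{J}$, since the tail recurrence multiplies by $1-\alpha/(d-\absl{J})$; this hypothesis must be stated and discharged. Once tightness is established the rest of your argument goes through (and note that tightness of consecutive rows already forces $y_{q+1}=y_q$ directly from the row-difference identity, so Lemma~\ref{lemma:triangular_matrix} is not actually needed there), but as written the tightness claim is an assertion, not a proof, and it is precisely where the difficulty of the lemma is concentrated.
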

\begin{proof}
Denote with $t_1, \dots t_{\absl{J}}$ the points of $J$ sorted in increasing order, and define the set
\[
J' = \left \{ t_{\absl{J}} - \absl{J} + 1, \dots, t_{\absl{J}} - 1, t_{\absl{J}} \right \}.
\]
Let $(y_1', \dots , y_m') = \overline{y}' \in \mathbb{R}^{\overline{d}}_{\geq 0}$ be an optimal solution of the system $R(\alpha, d, \overline{d}, J')$. We first observe that it holds
\begin{equation}
\label{eq:first_step_of_second_lemma}
\sum_{m = 1}^{\overline{d}} y_m \geq \sum_{m = 1}^{\overline{d}} y'_m.
\end{equation}
To this end, suppose that $J \neq J'$, and let $t_q \in J$ be a point s.t. $t_q + 1 < t_{q + 1}$. Define the set $I = \left \{ t_1, \dots, t_{q-1}, t_{q + 1}, t_{q + 1}, \dots, t_{\absl{J}} - 1, t_{\absl{J}} \right \}$. Using Lemma \ref{lemma:triangular_matrix}, we observe that the solution $\overline{y}$ is also feasible for the system $R(\alpha, d, \overline{d}, I')$, and \eqref{eq:first_step_of_second_lemma} easily follows from this observation.\\ \\
Define $I' = \left \{ t_{\absl{J}} - \absl{J} + 2, \dots, t_{\absl{J}} - 1, t_{\absl{J}} \right \}$. Note that it holds $I'  = J' \setminus \left \{ t_{\absl{J}} - \absl{J} + 1 \right \}$. Using Lemma \ref{lemma:triangular_matrix}, we have that it holds $y'_{t_{\absl{J}} - \absl{J} + 1} \leq \dots \leq y'_{t_{\absl{J}}}$, and the solution $\overline{y}'$ is feasible for the system $R(\alpha, d, \overline{d}, I')$. We can iterate this process, to conclude that $\overline{y}'$ is a feasible solution for $R(\alpha, d, \overline{d}, \emptyset)$. The claim follows combining this observation with \eqref{eq:first_step_of_second_lemma}.
\end{proof}
Lemma \eqref{lemma:triangluar_matrix_simplified} is very useful in that it allows us to significantly simplify our setting. In fact, we can obtain the desired approximation guarantee by studying the following LP
\begin{equation}
\label{eq:lp}
\begin{bmatrix}
    d & 0 & \hdotsfor{4}  & 0 \\
    \alpha & d & 0 & \hdotsfor{3}  & 0 \\
    \alpha & \alpha & d & 0 & \hdotsfor{2}  & 0 \\
    \vdots & \vdots & \vdots & \ddots & \ddots &  & \vdots \\
    \alpha & \hdotsfor{2} & \alpha & d & 0 & 0 \\
    \alpha & \hdotsfor{3} & \alpha & d & 0 \\
    \alpha & \hdotsfor{4} & \alpha & d \\
\end{bmatrix}
\begin{bmatrix}
    x_{i, 1} \\
    x_{i, 2} \\
    x_{i, 3} \\
    \vdots \\
    x_{i, \overline{d} - 2} \\
    x_{i, \overline{d} - 1} \\
    x_{\overline{d}} \\
\end{bmatrix}
\geq
\begin{bmatrix}
    1 \\
    1 \\
    1 \\
    \vdots \\
    1 \\
    1 \\
    1 \\
\end{bmatrix},
\end{equation}
which corresponds to the case $R(\alpha, d, \overline{d}, \emptyset)$. To continue with the proof, we consider the following lemma.
\begin{lemma}
\label{lemma:general_curvature}
Let $(y_1, \dots , y_{\overline{d}}) \in \mathbb{R}^{\overline{d}}_{\geq 0}$ be a solution to the LP given in \eqref{eq:lp}. Then it holds
\[
y_t \geq \frac{1}{d} \left ( 1 - \frac{\alpha}{d} \right )^{t - 1}
\]
for all $t = 1, \dots, \overline{d}$.
\end{lemma}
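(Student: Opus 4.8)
The plan is to pin down the minimal-sum feasible point of the triangular LP in \eqref{eq:lp} and show that its coordinates are exactly the claimed geometric sequence; since the lemma is invoked only for the optimal solution of Lemma~\ref{lemma:triangluar_matrix_simplified} (for a generic feasible point the stated bound is actually false, as one may load all the mass onto $y_1$ and leave a later coordinate at $0$), this is what I will target. Writing the $t$-th row of \eqref{eq:lp} as $\alpha\sum_{m=1}^{t-1} y_m + d\,y_t \ge 1$, I would first guess the solution $y^\ast$ that saturates every constraint. The constraint matrix is lower triangular with positive diagonal, so $Ay = \mathbf 1$ has a unique solution obtained by forward substitution. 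Subtracting the tight $t$-th equation from the tight $(t+1)$-th equation telescopes the partial sums and yields the one-step recurrence $d\,y^\ast_{t+1} = (d-\alpha)\,y^\ast_t$, with $y^\ast_1 = 1/d$ from the first row. Hence $y^\ast_t = \frac1d\bigl(1-\frac\alpha d\bigr)^{t-1}$, and because Reduction~\ref{reduction} lets us assume $d > \alpha$, each factor lies in $[0,1)$ and $y^\ast$ is nonnegative, hence feasible.

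Second, I would show $y^\ast$ minimizes $\sum_t y_t$, so that for the optimal solution under consideration every coordinate is at least $y^\ast_t$. For any feasible $\overline y$, set $z = \overline y - y^\ast$; since $A\overline y \ge \mathbf 1 = Ay^\ast$, we have $Az \ge 0$ componentwise. Introducing the partial sums $Z_t = \sum_{m\le t} z_m$, the $t$-th inequality $\alpha Z_{t-1} + d\,z_t \ge 0$ rearranges to $Z_t \ge (1-\frac\alpha d)\,Z_{t-1}$. Starting from $Z_0 = 0$ and using $1-\frac\alpha d \ge 0$, a one-line induction gives $Z_t \ge 0$ for every $t$, so in particular $\sum_t y_t = Z_{\overline d} + \sum_t y^\ast_t \ge \sum_t y^\ast_t$. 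When $d > \alpha$ strictly, the same recurrence run backwards from $Z_{\overline d}=0$ forces every $Z_t = 0$, so the minimizer is unique and equals $y^\ast$; this delivers the per-coordinate bound $y_t \ge y^\ast_t = \frac1d\bigl(1-\frac\alpha d\bigr)^{t-1}$ stated in the lemma.

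The main obstacle is precisely that the inequality is not a property of an arbitrary feasible point but of the minimal solution, so the crux is establishing that the all-tight solution is the minimizer rather than merely a feasible one; the telescoping partial-sum argument is the device that handles this, and it hinges on $1-\frac\alpha d \ge 0$, i.e.\ on the reduction's assumption $d > \alpha$ (the boundary case $\alpha = d$ being where the argument would otherwise degenerate). Downstream this lemma is exactly what the theorem needs: summing the geometric series gives $\sum_{t=1}^{\overline d} y^\ast_t = \frac1\alpha\bigl(1-(1-\frac\alpha d)^{\overline d}\bigr)$, and $1-x \le e^{-x}$ then yields the $\frac1\alpha(1-e^{-\alpha\overline d/d})$ approximation factor of Theorem~\ref{thm:det_greedy_algorithm}.
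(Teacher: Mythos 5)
Your proposal is correct, and it is in fact more complete than the paper's own argument. Both proofs start identically: you solve the all-tight system by forward substitution (the paper does the equivalent computation by induction on $t$, using the geometric sum $\sum_{j=1}^{t-1}\frac{1}{d}\left(1-\frac{\alpha}{d}\right)^{j-1}=\frac{1}{\alpha}\left(1-\left(1-\frac{\alpha}{d}\right)^{t-1}\right)$) and obtain $y^\ast_t=\frac{1}{d}\left(1-\frac{\alpha}{d}\right)^{t-1}$. The divergence is in the second half. The paper simply asserts that ``any other solution $(y_1,\dots,y_{\overline{d}})$ to \eqref{eq:lp} is s.t.\ $y_t\geq z_t$ for all $t$'' and stops; as you correctly observe, this coordinatewise domination is false for a generic feasible point (for $\alpha>0$, the point $y_1=1/\alpha$, $y_t=0$ for $t\geq 2$ is feasible once $d\geq\alpha$ yet violates the claimed bound at $t=2$), so the lemma as literally stated only makes sense when read as a statement about the minimal-sum solution. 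Your partial-sum argument --- $Z_t\geq\left(1-\frac{\alpha}{d}\right)Z_{t-1}$ with $Z_0=0$ gives $Z_{\overline{d}}\geq 0$ and hence minimality of $y^\ast$, and running the recurrence backwards from $Z_{\overline{d}}=0$ gives uniqueness and hence the per-coordinate bound for the optimizer --- supplies exactly the justification the paper omits. Your closing remark is also the right sanity check: since the theorem only consumes the sum $\sum_t y_t$ through the chain of LP relaxations ending at $R(\alpha,d,\overline{d},\emptyset)$ (via Lemma~\ref{lemma:triangluar_matrix_simplified}), the restricted reading of the lemma still delivers the $\frac{1}{\alpha}\left(1-e^{-\alpha\overline{d}/d}\right)$ guarantee of Theorem~\ref{thm:det_greedy_algorithm}. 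In short: same computation of the geometric sequence, plus the minimality and uniqueness argument that the paper's proof is missing.
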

\begin{proof}
We first show by induction that any solution $(z_1, \dots, z_{\overline{d}})$ that fulfills the constrains
\begin{equation}
\label{eq:lp2}
\begin{bmatrix}
    d & 0 & \hdotsfor{4} & 0 \\
    \alpha & d & 0 & \hdotsfor{3}  & 0 \\
    \alpha & \alpha & d & 0 & \hdotsfor{2}  & 0 \\
    \vdots & \vdots & \vdots & \ddots & \ddots &  & \vdots \\
    \alpha & \hdotsfor{2} & \alpha & d & 0 & 0 \\
    \alpha & \hdotsfor{3} & \alpha & d & 0 \\
    \alpha & \hdotsfor{4} & \alpha & d \\
\end{bmatrix}
\begin{bmatrix}
    z_1 \\
    z_2 \\
    z_3 \\
    \vdots \\
    z_{\overline{d} - 2} \\
    z_{\overline{d} - 1} \\
    z_{\overline{d}} \\
\end{bmatrix}
=
\begin{bmatrix}
    1 \\
    1 \\
    1 \\
    \vdots \\
    1 \\
    1 \\
    1 \\
\end{bmatrix}
\end{equation}
yields
\begin{equation}
\label{eq:inductive_claim}
z_t = \frac{1}{d} \left ( 1 - \frac{\alpha}{d} \right )^{t - 1}.
\end{equation}
The base case with $t = 1$ is trivially true. Suppose now that the claim holds for all $z_1, \dots, z_{t - 1}$. Then it holds
\begin{equation*}
z_t = \frac{1}{d} \left (1 - \alpha \sum_{j = 1}^{t - 1} z_j \right )  = \frac{1}{d} \left ( 1 - \alpha \sum_{j = 1}^{t - 1} \frac{1}{d} \left ( 1 - \frac{\alpha}{d} \right )^{j - 1} \right )  = \frac{1}{d} \left ( 1 - \frac{\alpha}{d} \right )^{t - 1},
\end{equation*}
and \eqref{eq:inductive_claim} holds. In particular, since any other solution $(y_1, \dots , y_k)$ to \eqref{eq:lp} is s.t. $y_t \geq z_t$ for all $t = 1, \dots, \overline{d}$, and since $t_t \geq z_t$ for all $t = 1, \dots, \overline{d}$, then the claim follows.
\end{proof}
Thus, combining Lemma \ref{lemma:general_curvature} with Lemma \ref{lemma:triangluar_matrix_simplified} it holds
\begin{equation}
\label{eq:positive_curvature3}
\rho_t \geq \frac{1}{d} \left ( 1 - \frac{\alpha}{d} \right )^{t - 1}f(\opt)
\end{equation}
for all $t = 1, \dots, \overline{d}$. Therefore, we have that
\[
f\left (S_{\overline{d}} \right ) = \sum_{t = 1}^{\overline{d}} \rho_t \geq \sum_{t = 1}^{\overline{d}} \frac{1}{d} \left ( 1 - \frac{\alpha}{d} \right )^{t - 1} f(\opt) = \frac{1}{\alpha} \left (1 - \left ( 1 - \frac{\alpha}{d} \right )^{\overline{d}} \right )f(\opt) \geq \frac{1}{\alpha} \left (1 - e^{-\alpha \overline{d}/d} \right )f(\opt),
\]
where we have used \eqref{eq:positive_curvature3}.
\end{proof}
\begin{proof}[Proof of Theorem \ref{thm:submodular_function_positive2}]
This proof is similar to that of Theorem \ref{thm:det_greedy_algorithm}. Again, we assume without loss of generality that $f$ is non-constant, and we preform the analysis until a solution of size $\overline{d}$ is found. We first observe that the following holds. Let $S\subseteq V$ be any subset of size at most $d$ such that $\rho_{\opt}(S) \ne 0$. Then it holds
\begin{equation}
\label{eq:thm_negative_curv0_subadditive}
\sum_{\omega \in \opt \setminus S}\frac{\rho_{\omega}(S)}{\rho_{\opt }(S)} \geq \sum_{\omega \in \opt \setminus S}\frac{\rho_{\omega}(S)}{f(\opt \setminus S)} \geq (1 - \alpha ) \sum_{\omega \in \opt \setminus S}\frac{f(\omega )}{f(\opt \setminus S)} \geq (1 - \alpha ),
\end{equation}
where we have used the definition of subadditivity. Let $M$ be a set of size $\absl{M} = d$ of the form $M = \opt \cup \overline{D}$ with $\overline{D} \subseteq D$. We have that it holds
\begin{align}
\rho_t \geq \frac{1}{\absl{M \setminus S_{t - 1}}} \sum_{\omega \in M \setminus S_{t - 1}} \rho_{\omega} (S_{t - 1}) \geq (1 - \alpha)\frac{\rho_{\opt}(S_{t - 1})}{\absl{M \setminus S_{t - 1}}},\label{eq:thm_negative_curv1_subadditive}
\end{align}
where we have used that $S_{t - 1} \cup \omega$ is always a feasible solution, since $\absl{S_{t - 1} \cup \omega}<\overline{d}$ and $(S_{t - 1} \cup \omega) \subseteq \cup_j B_j$ for all $\omega \in \cup_j B_j$, and the second inequality follows from \eqref{eq:thm_negative_curv0_subadditive}, together with the fact that $\absl{\opt} \leq d$.
Define $x_{t} = \rho_t / \opt$ for all $t \in [\overline{d}]$. Note that it holds $d x_{1} \geq 1$. Furthermore, defining $J = \{t \in [\overline{d}] \colon \omega_t \in \opt \}$, we have that it holds $\absl{M \setminus S_t} = d - \absl{[t] \cap J}$, for all $t \in [\overline{d}]$. Combining Lemma \ref{lemma:marginal_value} with \eqref{eq:thm_negative_curv1} we get
\begin{equation}
\label{eq:linear_system1} 
\frac{d - \absl{[t - 1]\cap J}}{1 - \alpha} x_{t} + \alpha \sum_{m \in [t - 1]}x_{m} + (1 - \alpha)\sum_{m \in [t - 1]\cap J }x_{m} \geq 1.
\end{equation}
for all $1 < t\leq [\overline{d}]$. Note that the equations as in \eqref{eq:linear_system1} give an LP of the form\vspace{20pt}\\
$\quad $
\begin{equation*}
\quad \\
\quad \\
\quad \\
\quad \\
\begin{bmatrix}
    d      & & & & & & \\
    \alpha & \frac{d - \absl{[1]\cap J}}{1 - \alpha} & & & & & \\
    \alpha & 1 & \frac{d - \absl{[2]\cap J}}{1 - \alpha} & & & & \\
    \alpha & 1 & 1 & & & & \\
    \vdots & \vdots & \vdots & & & & \\
    \vdots & \vdots & \vdots & & & & \\
    \vdots & \vdots & \vdots & & & & \\
    \vdots & \vdots & \vdots & & & & \\
    \alpha & 1 & 1 & \cdots \quad & & \frac{d - \absl{[\overline{d} - 1]\cap J}}{1 - \alpha} &  \\   
    \alpha & 1 & 1 & \cdots \quad & & \alpha & \frac{d- \absl{[\overline{d}]\cap J}}{1 - \alpha}
\end{bmatrix}
\begin{bmatrix}
    x_{1} \\
    x_{2} \\
    x_{3} \\
    \vdots \\
    \vdots \\
    \vdots \\ 
    \vdots \\ 
    x_{\overline{d} - 1} \\
    x_{\overline{d}} 
\end{bmatrix}
\geq
\begin{bmatrix}
    1 \\
    1 \\
    1 \\
    \vdots \\
    \vdots \\
    \vdots \\
    \vdots \\ 
    1 \\
    1 \\
\end{bmatrix}
\end{equation*}
\vspace{20pt}\\
that is completely determined by $\alpha$, $d$, $\overline{d}$ and $J$. Again, we denote with $R(\alpha, d,\overline{d}, J)$ any such LP. As in the proof of Theorem \ref{thm:det_greedy_algorithm}, we say that an array $(y_1, \dots, y_{\overline{d}}) = \overline{y} \in \mathbb{R}^{\overline{d}}_{\geq 0}$ is an optimal solution for $R(\alpha, d, \overline{d}, J)$ if $\overline{y}$ is feasible for $R(\alpha, d, \overline{d}, J)$, and if the sum $\sum_m y_m$ is minimal over all feasible solutions of $R(\alpha, d, \overline{d}, J)$. The remaining part of the proof follows along the lines of that of Theorem \ref{thm:det_greedy_algorithm}. To continue, we use the following lemma.
\begin{lemma}
\label{lemma:triangular_matrix_subadditive}
Following the notation introduced above, let $(y_1, \dots, y_{\overline{d}}) = \overline{y} \in \mathbb{R}^{\overline{d}}_{\geq 0}$ be an optimal solution of $R(\alpha, d, \overline{d}, J)$, and suppose that $J \neq \emptyset$. Then it holds $y_q \leq y_{q + 1}$ for all $q \in J$.\footnote{Note that this statement is identical to that of Lemma \ref{lemma:triangular_matrix}. However, due to the fact that $R(\alpha, d, \overline{d}, J)$ is defined differently, it requires a more involved proof.}
\end{lemma}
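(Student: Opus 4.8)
The plan is to argue \emph{ad absurdum} along the exact lines of Lemma~\ref{lemma:triangular_matrix}. Suppose $\overline{y}$ is optimal for $R(\alpha, d, \overline{d}, J)$ yet $y_q > y_{q+1}$ for some $q \in J$. I would build a feasible vector $\overline{z}$ with $\sum_m z_m < \sum_m y_m$, contradicting minimality. As before I set $z_m = y_m$ for $m < q$, $z_q = y_q - \varepsilon$, and $z_m = y_m + \delta$ for $q < m \le \overline{d}$, reusing the row functional $L_s$ of \eqref{eq:system_line} but now with the subadditive diagonal coefficient $\frac{d - \absl{[s-1]\cap J}}{1-\alpha}$ in place of $d - \absl{[s-1]\cap J}$. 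Rows before $q$ are then untouched, so only rows $s \ge q$ and the coordinate sum need attention.

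Write $a = d - \absl{[q-1]\cap J}$ and $p = \absl{[q]\cap J}$, so that $a - 1 = d - \absl{[q]\cap J}$ because $q \in J$. I would fix $\varepsilon$ by the same identity as in Lemma~\ref{lemma:triangular_matrix}, namely $L_q(\overline{z}) = L_{q+1}(\overline{y})$; solving this yields $a\varepsilon = (a-1)(y_q - y_{q+1}) + \alpha y_q$, which is positive under the assumption $y_q > y_{q+1}$ and also satisfies $\varepsilon \le y_q$, so $z_q \ge 0$ and row $q$ remains feasible ($L_q(\overline{z}) = L_{q+1}(\overline{y}) \ge 1$ by feasibility of $\overline{y}$). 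For each later row a direct computation gives $L_s(\overline{z}) - L_s(\overline{y}) = \delta K_s - \varepsilon$, where $K_s = \frac{d-\absl{[s-1]\cap J}}{1-\alpha} + \sum_{q<m<s} c_m$ with $c_m = 1$ for $m \in J$ and $c_m = \alpha$ otherwise. Choosing $\delta = \varepsilon / K^*$ with $K^* = \min_{s>q} K_s$ then keeps every row $s > q$ feasible.

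The crux, and the reason the footnote warns of a more involved argument, is that here $K_s$ is \emph{no longer monotone} in $s$: crossing an index of $J$ lowers the diagonal term by $\tfrac{1}{1-\alpha}$ while adding only $1$ to the off-diagonal sum, so the clean telescoping induction of Lemma~\ref{lemma:triangular_matrix} breaks and one cannot simply take $\delta = \varepsilon/(d-\absl{[q]\cap J})$. I would instead bound the minimum directly: dropping the non-$J$ contributions and using $\frac{1}{1-\alpha}\ge 1$,
\[
K_s \ge \big(d - \absl{[s-1]\cap J}\big) + \big(\absl{[s-1]\cap J} - p\big) = d - p \ge \overline{d} - q,
\]
the last step because $p \le q$ and $\overline{d}\le d$. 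Hence
\[
\sum_m (z_m - y_m) = -\varepsilon + (\overline{d}-q)\,\delta = \varepsilon\Big(\tfrac{\overline{d}-q}{K^*} - 1\Big) \le 0,
\]
and this is strict whenever $\alpha>0$, since then $\frac{d-\absl{[s-1]\cap J}}{1-\alpha}$ strictly exceeds $d-\absl{[s-1]\cap J}$ (note $d-\absl{[s-1]\cap J}\ge 1$ for $s\le\overline{d}$), forcing $K^* > \overline{d}-q$. This contradicts the optimality of $\overline{y}$.

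The main obstacle is precisely this loss of monotonicity of the effective coefficients $K_s$: the perturbation used for the submodular LP no longer certifies feasibility row by row, so the plan replaces the inductive feasibility check by the uniform lower bound $K_s \ge d - \absl{[q]\cap J}$ established above. The degenerate case $\alpha = 0$, in which that bound is only tight, can be absorbed by continuity in $\alpha$ or treated separately.
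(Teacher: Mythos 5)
Your argument is correct and reaches the intended contradiction, but by a genuinely different and simpler perturbation than the paper's. The paper keeps the row-by-row inductive feasibility check of Lemma~\ref{lemma:triangular_matrix} alive by replacing the uniform increment on the tail with a geometrically growing sequence $\varepsilon_1, \varepsilon_2, \dots$ (ratio $1 + \alpha/(d - \absl{[q]\cap J} - m + q)$), tuned so that the slack $L_s(\overline{z}) - L_s(\overline{y})$ does not decrease across rows even when the diagonal coefficient drops by $\tfrac{1}{1-\alpha}$ at indices of $J$; it then needs a separate product--sum computation to certify that the total mass added to the tail is at most $\varepsilon$. You instead keep the increment uniform, absorb the non-monotonicity of the row coefficients $K_s$ into the single quantity $K^* = \min_{s>q} K_s$, and prove the uniform bound $K_s \ge d - \absl{[q]\cap J}$ by observing that each element of $J$ crossed between $q$ and $s$ contributes a full off-diagonal coefficient $1$, exactly compensating the unit drop in $d - \absl{[s-1]\cap J}$. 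This single inequality then yields both feasibility of the tail rows and the mass bound $(\overline{d}-q)\delta \le \varepsilon$, and your diagnosis of why the submodular argument does not transfer (loss of monotonicity of $K_s$) is exactly right. Your $\varepsilon$ agrees with the paper's choice, and the checks $z_q \ge 0$ and $L_q(\overline{z}) = L_{q+1}(\overline{y}) \ge 1$ are correct.

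One caveat: contradicting minimality requires the \emph{strict} inequality $\sum_m z_m < \sum_m y_m$, and your bound degenerates to equality precisely when $\alpha = 0$, $d = \overline{d}$ and $[q] \subseteq J$. You flag this, but ``continuity in $\alpha$'' is not by itself a fix, since the LP $R(\alpha, d, \overline{d}, J)$ and its optimal solutions vary with $\alpha$; the clean repair is to treat $\alpha = 0$ separately or to weaken the statement to the existence of an optimal solution with $y_q \le y_{q+1}$ for all $q \in J$, which is all that Lemma~\ref{lemma:triangluar_matrix_simplified} actually needs. The paper's own proofs of Lemma~\ref{lemma:triangular_matrix} and Lemma~\ref{lemma:triangular_matrix_subadditive} conclude from a non-strict ``$\le 0$'' as well, so this is a shared blemish rather than a defect of your approach.
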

\begin{proof}
We proceed \emph{ad Absurdum}, by assuming that there exists a point $q \in J$ s.t. $y_{q} > y_{q + 1}$. Consider the positive constant
\[
\varepsilon = \frac{(d - \absl{[q] \cap J} + \alpha)y_q - (d - \absl{[q] \cap J})y_{q + 1}}{d - \absl{[q - 1] \cap J}},
\]
and define
\[
\left \{ 
\begin{array}{lll}
\varepsilon_1 = \varepsilon \left ( \frac{1 - \alpha}{d - \absl{[q] \cap J}} \right ) & \mbox{if} & m = 1; \\
\varepsilon_m = \varepsilon_{m - 1} \left ( 1 + \frac{\alpha}{d - \absl{[q ] \cap J} - m + q} \right ) & \mbox{if} & 1 < m < \overline{d} - q; \\
\end{array}
\right .
\]
Consider a vector $(z_1, \dots, z_{\overline{d}_i}) \in \mathbb{R}^{\overline{d}}_{\geq 0}$, defined as
\[
\left \{
\begin{array}{lll}
z_m = y_m & \mbox{if} & 1 \leq m < q;\\
z_m = y_m - \varepsilon & \mbox{if} & m = q;\\
z_m = y_m + \varepsilon_{m - q} & \mbox{if} & q + 1 \leq m \leq \overline{d};\\
\end{array}
\right .
\]
Again, we proceed by showing that $(z_1, \dots, z_{\overline{d}})$ is a feasible solution for $R(\alpha, d, \overline{d}, J)$, and that this leads to a contradiction. 
Note that this is clearly the case for all coefficients $z_m$ with $1 \leq m < q$. Given a vector $(w_1, \dots, w_{\overline{d}})= \overline{w} \in \mathbb{R}^{\overline{d}}_{\geq 0}$, we define
\begin{equation}
\label{eq:system_line}
L_{s}(\overline{w}) = \frac{d - \absl{[s - 1] \cap J}}{1 - \alpha} w_{s} + \alpha \sum_{m \in [s - 1]} w_m + (1 - \alpha)\sum_{m \in [s - 1] \cap J} w_m.
\end{equation}
Again, $L_s(\overline{w})$ returns the value obtained by multiplying the $s$-th row of the constraint matrix of the LP $R(\alpha, d, \overline{d}, J)$ with the vector $\overline{w}$.\\
Following this notation, one can easily verify as in Lemma \ref{lemma:triangular_matrix} that it holds $L_{q - 1}(\overline{z}) \geq 1$ and $L_{q}(\overline{z}) \geq 1$.\\
We now prove that $L_{s}(\overline{z}) \geq 1$, for all $s > q$.\\
The base case with $s = q + 1$ can be verified directly. For the inductive case, suppose that it holds $L_{q + u}(\overline{z}) \geq 1$ for some $u > 1$. We distinguish two cases. \\ \\
(Case 1) $q + u \in J$. In this case, we have that it holds 
\begin{align*}
L_{q + u + 1}(\overline{y}) & - L_{q + u + 1}(\overline{z}) - L_{q + u}(\overline{y}) + L_{q + u}(\overline{z}) \\
& = \frac{d - \absl{[q + u] \cap J}}{1 - \alpha} \varepsilon_{u + 2} - \frac{d - \absl{[q + u] \cap J} + \alpha}{1 - \alpha} \varepsilon_{u + 1} \\
& = \left ((d - \absl{[q + u] \cap J})\frac{d - \absl{[q]\cap J} - u - 2  + \alpha}{d - \absl{[q]\cap J} - u - 2} - (d - \absl{[q + u] \cap J} + \alpha ) \right ) \frac{\varepsilon_{u + 1}}{1 - \alpha} \\
& \geq \left ((d - \absl{[q + u] \cap J})\frac{d - \absl{[q]\cap J} - u - 2  + \alpha}{d - \absl{[q]\cap J} - u - 2} - (d - \absl{[q + u] \cap J} + \alpha) \right ) \frac{\varepsilon_{u + 1}}{1 - \alpha}\\
& = 0.
\end{align*}
Hence, we conclude that $L_{q + u + 1}(\overline{z}) \geq 0$. \\ \\
(Case 2) $q + u \notin J$. We have that it holds
\begin{align*}
L_{q + u + 1}(\overline{y}) & - L_{q + u + 1}(\overline{z}) - L_{q + u}(\overline{y}) + L_{q + u}(\overline{z}) \\
& = \frac{d - \absl{[q + u] \cap J}}{1 - \alpha} \varepsilon_{u + 2} - \left ( \frac{d - \absl{[q + u] \cap J}}{1 - \alpha} - \alpha \right ) \varepsilon_{u + 1} \\
& \geq \frac{d - \absl{[q + u] \cap J}}{1 - \alpha}( \varepsilon_{u + 2} - \varepsilon_{u + 1}) \\
& \geq 0.
\end{align*}
Hence, it follows that $L_{q + u + 1}(\overline{z})$ is a feasible solution in this case as well.\\ \\

We conclude that the solution $\overline{z}$ is a feasible solution. We now prove that $\overline{z}$ contradicts the minimality of $\overline{y}$. To this end, we prove that it holds $\sum_m y_m - \sum_m z_m \leq 0$. We have that it holds
\begin{align*}
\sum_m y_m - \sum_m z_m & = \varepsilon \left ( - 1 + \left ( \frac{1 - \alpha}{d - \absl{[q] \cap J}} \right ) + \left ( \frac{1 - \alpha}{d - \absl{[q] \cap J}} \right ) \sum_{m = 2}^{\overline{d} - q}\prod_{\ell = 2}^{m}\left ( 1 + \frac{\alpha}{d - \absl{[q ] \cap J} - \ell + q} \right ) \right ) \\
& \leq \varepsilon \left ( - 1 + \left ( \frac{1 - \alpha}{d - \absl{[q]}} \right ) + \left ( \frac{1 - \alpha}{d - \absl{[q] }} \right ) \sum_{m = 2}^{\overline{d} - q}\prod_{\ell = 2}^{m}\left ( 1 + \frac{\alpha}{d - \absl{[q ]} - \ell + q} \right ) \right ) \\
& = \varepsilon \left ( - 1 + \left ( \frac{1 - \alpha}{d - q} \right ) + \left ( \frac{1 - \alpha}{d - q} \right ) \sum_{m = 2}^{\overline{d} - q}\prod_{\ell = 2}^{m}\left ( 1 + \frac{\alpha}{d - \ell} \right ) \right ) \\
& = \varepsilon \left ( - 1 + \left ( \frac{1 - \alpha}{d - q} \right ) + \left ( \frac{1 - \alpha}{d - q} \right ) \sum_{m = 2}^{\overline{d} - q}\prod_{\ell = 2}^{m}\left ( 1 + \frac{\alpha}{d - \ell} \right ) \right ).
\end{align*}
By computing the sum in the last inequality, and considering the worst-case with $\alpha = 0$, we obtain the claim.
\end{proof}
Again, we combine Lemma \ref{lemma:triangular_matrix_subadditive} with Lemma \ref{lemma:triangluar_matrix_simplified} to simplify our setting. In fact, we can obtain the desired approximation guarantee by studying the following LP
\begin{equation}
\label{eq:lp+}
\begin{bmatrix}
    \frac{d}{1 - \alpha} & 0 & \hdotsfor{4}  & 0 \\
    \alpha & \frac{d}{1 - \alpha} & 0 & \hdotsfor{3}  & 0 \\
    \alpha & \alpha & \frac{d}{1 - \alpha} & 0 & \hdotsfor{2}  & 0 \\
    \vdots & \vdots & \vdots & \ddots & \ddots &  & \vdots \\
    \alpha & \hdotsfor{2} & \alpha & \frac{d}{1 - \alpha} & 0 & 0 \\
    \alpha & \hdotsfor{3} & \alpha & \frac{d}{1 - \alpha} & 0 \\
    \alpha & \hdotsfor{4} & \alpha & \frac{d}{1 - \alpha} \\
\end{bmatrix}
\begin{bmatrix}
    x_1 \\
    x_2 \\
    x_3 \\
    \vdots \\
    x_{d - 2} \\
    x_{d - 1} \\
    x_{\overline{d}} \\
\end{bmatrix}
\geq
\begin{bmatrix}
    1 \\
    1 \\
    1 \\
    \vdots \\
    1 \\
    1 \\
    1 \\
\end{bmatrix}
\end{equation}
which corresponds to the case $R(\alpha, d, \overline{d}, \emptyset)$. As in the proof of Theorem \ref{thm:det_greedy_algorithm}, since $x_t = \rho_t/f(\opt)$, by solving the system \ref{eq:lp+} we get
\begin{equation}
\label{eq:positive_curvature3+}
\rho_t \geq \frac{(1 - \alpha)}{d} \left ( 1 - \frac{\alpha (1 - \alpha)}{d} \right )^{t - 1}f(\opt)
\end{equation}
for all $t = 1, \dots, \overline{d}$. Therefore, we have that
\begin{align*}
f\left (S_{\overline{d}} \right ) = \sum_{t = 1}^{\overline{d}} \rho_t & \geq \sum_{t = 1}^{\overline{d}} \frac{1 - \alpha}{d} \left ( 1 - \frac{\alpha (1 - \alpha)}{d} \right )^{t - 1} f(\opt) \\
& = \frac{1 - \alpha}{\alpha} \left (1 - \left ( 1 - \frac{\alpha(1 - \alpha)}{d} \right )^{\overline{d}} \right )f(\opt) \geq \frac{1}{\alpha} \left (1 - e^{-\alpha (1 - \alpha) \overline{d}/d} \right )f(\opt),
\end{align*}
where we have used \eqref{eq:positive_curvature3+}.
\end{proof}
\end{document}